\newcommand{\nn}{{\nonumber}}
\newtheorem{theorem}{Theorem}
\newtheorem{remark}[theorem]{Remark}
\newtheorem{example}[theorem]{Example}
\newtheorem{lemma}[theorem]{Lemma}
\newtheorem{proposition}[theorem]{Proposition}
\newtheorem{assumption}{Assumption}
 \newcommand{\Rbb}{\mathbb R}
\newcommand{\xh}{\hat{x}}
\newcommand{\Au}{A_{UIO}}
\newcommand{\Buu}{B_{UIO}^u}
\newcommand{\Buy}{B_{UIO}^y}
\newcommand{\Du}{D_{UIO}}
\newcommand{\zero}{\mathbb{0}}
\DeclareMathOperator{\im}{Im}
\DeclareMathOperator{\rank}{rank}
\DeclareMathOperator{\Ker}{ker}
 \title{\LARGE\bf  A data-driven approach to UIO-based fault diagnosis}
\author{Giulio Fattore  and Maria Elena Valcher
\thanks{G. Fattore and  M.E. Valcher are with the Dipartimento di Ingegneria dell'Informazione,
 Universit\`a di Padova,
    via Gradenigo 6B, 35131 Padova, Italy, e-mail:  \texttt{giulio.fattore@phd.unipd.it, meme@dei.unipd.it}}
   }
  \date{}
\begin{document}
\maketitle

\begin{abstract}
In this paper we propose a data-driven approach to the design of a residual generator, based on a dead-beat unknown-input observer, for linear time-invariant discrete-time state-space models, whose state equation is affected both by disturbances and by actuator faults.
We first review the model-based conditions for the existence of  such a residual generator, and then prove that under suitable assumptions on the collected historical data, we are both able to determine if the problem is solvable and to identify the matrices of a possible residual generator.
We propose an algorithm that, based only on the collected data (and not on the system description), is able to perform both tasks. An illustrating example and some remarks on limitations and possible extensions of the current results conclude the paper.
\end{abstract}

\section{Introduction}
Data-driven approaches to the solution of control problems are quite pervasive in nowadays literature. In particular,  data-driven fault detection (FD) methods have been the subject of a significant number of papers \cite{MDPI2022,DingJPC2014, DingAutomatica2014,DingBook,JoeQin2009,LundgrenJung2022} Since the early works of J.F. Dong and M. Verhaegen \cite{DongVerhaegen2009} and  of S.X. Ding et al. \cite{DingIFAC2011}, 
FD schemes that are designed directly based on collected data, by avoiding the preliminary step of system identification, have been proposed. 
Most of the available  results, however, focus on detection, rather than identification, \cite{DingIFAC2011,DingIFAC2011_2, DingAutomatica2014} and provide algorithms to design the matrices of a residual generator, based for instance on QR-decomposition and SVD \cite{DingAutomatica2014}. Also, necessary and sufficient conditions for the problem solvability have typically been expressed in terms of the original system matrices, and not on the available data. This means that the assumption that 
fault detection and identification (FDI) can be successfully performed on the system is taken for granted and not directly checked on data.

Leveraging  some recent results on data-driven unknown-input observer design (see \cite{Ferrari-Trecate} or \cite{TAC-UIOdisaro,DisaròValcherReducedUIO}), we propose a data-driven approach to the design of a residual generator, based on a dead-beat unknown input observer (UIO) \cite{Ansari2019,HaraDBO}, for a generic linear time invariant state-space model, whose state equation is affected both by disturbances and by faults.
We first
 provide   necessary and sufficient conditions for the problem solvability, by adopting a model-based approach that relies on classic results by J. Chen and R.J. Patton \cite{Chen}. Secondly, we show that under a suitable assumption on the  data, that ensures that they are representative of the original system trajectories, we derive 
necessary and sufficient conditions for the problem solvability in terms  of the available data. Such conditions are weaker than the conditions that guarantee the identifiability of the original system matrices.
Finally, we provide an algorithm to derive the matrices of a dead beat UIO-based residual generator that allows to identify any fault affecting the system.

As previously mentioned, the paper leverages  the results about full-order and reduced-order asymptotic observers proposed in  \cite{Ferrari-Trecate} or \cite{TAC-UIOdisaro,DisaròValcherReducedUIO}, but significantly advances them in three aspects: first, it provides a way to check on the data (only) the problem solvability. Secondly, it provides a practical algorithm to design the matrices of a possible dead-beat UIO (while  the previous references focused only on existence conditions). Thirdly, it extends the analysis to FDI.
The  more restrictive choice of focusing on dead-beat UIOs, rather than   asymptotic UIOs, is meant to provide  a cleaner set-up, that allows for exact solutions, since one does not need to account for the contribution of the estimation error when trying to identify the fault.
As discussed in Remark~\ref{rem:threshold} and in the Concluding remarks, the case of a residual generator based on an asymptotic UIO could be treated in a similar way, and would lead to very similar results.
The only increased complexity is related to the practical implementation  of   fault identification, as it would require some least squares estimation procedure,  to remove the effect of the estimation error. Other possible limitations of the current set-up that can be easily overcome are discussed in the Concluding remarks.

The paper is organised as follows. In Section~\ref{sec:PS} we introduce the overall set-up and formally state the FDI problem we address in this paper, that assumes that disturbances and faults affect only the state update.Section~\ref{sec:MB} recalls the model-based solution to the design of a residual generator based on  a dead-beat UIO, by suitably adjusting the one available in the literature for asymptotic UIOs. 
We also provide necessary and sufficient conditions for the existence of a residual generator, based on the dead-beat UIO, that allows to uniquely identify the fault from the residual signal. There are high chances that this specific result is already known in the literature, but since we were not able to find a reference, we provided a short proof to lay on  solid ground  the subsequent data-driven analysis that relies on this result.
Under a rather common assumption on the data (see Assumption~\ref{ass:PE}), that can be related to the persistence of excitation \cite{vanWaardeWillems,WillemsPE} of the system inputs, in Section~\ref{sec:DD} we first provide data-based necessary and sufficient conditions for the problem solvability, and then, by resorting to a couple of technical results, we provide a simple Algorithm that allows to first check on data the problem solvability conditions, and then provides matrices 
of a dead-beat UIO-based residual generator. 
An example and some final remarks conclude the paper.

{\bf Notation}.
$I_N$ is the identity matrix of dimension $N$, $\zero_N\in \Rbb^N$ and $\zero_{n \times m}\in \Rbb^{n \times m}$ are the $N$-dimensional vector and $n\times m$ matrix, respectively, with all entries equal to $0$.
In some occasions, when the size of the zero block may be easily deduced from the context, the suffix $n \times m$ is omitted.\\
Given a matrix $M\in \Rbb^{n \times m}$, we denote by $M^\top$ its transpose, and by $M^\dag$ its Moore-Penrose inverse \cite{BenIsraelGreville}. If $M$ is of full column rank (FCR), then
$M^\dag=[M^{\top} M]^{-1}M^{\top}.$ A similar expression can be provided in the case when $M$ is of full row rank (FRR).\\
Given an $n$-dimensional  vector sequence $\{\omega(k)\}_{k \in {\mathbb Z}_+}$ taking values in $ \Rbb^n$ and a positive integer $N$,  we set
\begin{equation} \label{eq:piledvector}
   \omega_N(k) \doteq\begin{bmatrix}
       \omega(k)\\
       \omega(k+1)\\
       \vdots\\
       \omega(k+N-1)
   \end{bmatrix}\in \Rbb^{Nn}.
\end{equation}
\smallskip

\section{UIO-Based Fault Detection and Isolation: Problem Statement} \label{sec:PS}
Consider a linear time-invariant (LTI) discrete-time dynamical system $\Sigma$, described by the state-space model:    
\begin{subequations} \label{eq:system}
\begin{align}
    x(k+1)&=Ax(k)+Bu(k)+Ed(k)+Bf(k),\\
        y(k)&=Cx(k),\qquad \qquad k\in {\mathbb Z}_+,
        \end{align}
\end{subequations}
where $x(k)\in \Rbb^n$, $u(k)\in\Rbb^m$, $y(k)\in \Rbb^p$, $d(k)\in \Rbb^r$ and $f(k)\in\Rbb^m$ are the state, input, output, disturbance and actuator fault signals, respectively. The   matrices involved in the system description have the following sizes: $A\in \Rbb^{n\times n}$, $B\in \Rbb^{n\times m}$, $C\in \Rbb^{p\times n}$ and $E\in \Rbb^{n\times r}$. 
It entails no loss of generality assuming that $E$ has rank $r$, since we can always reduce ourselves to this case (possibly by redefining the disturbance $d(t)$).\\

We assume for the  UIO-based residual generator $\hat{\Sigma}$ the following description   \cite{Chen,Frank}:
\begin{subequations} \label{eq:UIOresgen}
    \begin{align}
        z(k+1)&=\Au z(k)+\Buu u(k)+\Buy y(k), \label{eq:UIOresgenstate}\\
        \xh (k)&=z(k)+\Du y(k),\label{eq:UIOresgenout}\\
        r(k)&=y(k)-C\xh(k), \qquad \qquad k\in {\mathbb Z}_+, \label{eq:UIOresgenres}
    \end{align}
\end{subequations}
where $z(k)\in \Rbb^n$ is the residual generator state vector, $\xh(k)\in\Rbb^n$ denotes the state estimate and $r(k)\in \Rbb^p$ the residual signal. The real matrices $\Au, \Buu, \Buy$ and $\Du$ have dimensions compatible with the previously defined vectors.\\
If we let $e(k)\doteq x(k)-\xh(k)$ denote the state estimation error at time $k$, then our goals are:
\begin{itemize}
\item[i)] We want to ensure that there exists $k_0 >0$ such that, when the system is not affected by actuator faults, for every initial condition $e(0)$, and every input and disturbance sequences applied to the original system $\Sigma$, the estimate $e(k)$ is zero for every $k\ge k_0$ (and hence so is the residual signal). This means that 
equations \eqref{eq:UIOresgenstate} and \eqref{eq:UIOresgenout} describe a dead-beat UIO \cite{Ansari2019,HaraDBO} for $\Sigma$;
\item[ii)] If a fault affects the system from some $k_f\ge k_0$ onward, then the residual signal $r(k)$ becomes nonzero (fault detection) and the knowledge of the residual allows to uniquely identify the fault that has affected $\Sigma$ (fault identification).
\end{itemize}
\smallskip

\section{Model Based Approach}  \label{sec:MB}
\label{sec3}
\subsection{Dead-beat UIO} 
In the following we will steadily work under the following:
\begin{assumption} \label{ass:UIO} We assume that
\begin{enumerate}[label={\textbf{\Alph*.}}, ref={1.\Alph*}]
    \item $\rank(CE)=\rank(E)=r$; \label{ass:UIO1}
    \item $\rank\left(\begin{bmatrix}
        zI_n-A & -E \\
        C & \zero_{p\times r}
    \end{bmatrix}\right) = n+r$,  $\forall \ z \in {\mathbb C}, |z|\ne 0$.
    \label{ass:UIO3}
\end{enumerate}
\end{assumption}
Assumption~\ref{ass:UIO} corresponds to what we call the {\em strong* reconstructability property} of the triple $(A,E,C)$, that extends the well known {\em strong* detectability property} \cite{Darouach2,Hautus}.
\\
\begin{proposition}
\label{prop:DBUIO}
    There exists a dead-beat UIO for system $\Sigma$ described as in \eqref{eq:UIOresgenstate}$\div$\eqref{eq:UIOresgenout} if and only if Assumption~\ref{ass:UIO} holds.
\end{proposition}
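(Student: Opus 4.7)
The plan is to follow the standard Chen--Patton route \cite{Chen} for unknown-input observer existence, with the one modification that the error dynamics must be nilpotent (dead-beat) rather than merely stable (asymptotic). This swaps the half-plane rank condition $|z|\ge 1$ of strong* detectability for the punctured-plane condition $|z|\ne 0$ of strong* reconstructability. The argument has a necessity and a sufficiency direction; I would organize both around the decomposition $A_{UIO}=TA-KC$, where $T\doteq I_n-D_{UIO}C$ and $K$ is a free gain.

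First, I would derive the error dynamics. Writing $e(k)=x(k)-\xh(k)=x(k)-z(k)-D_{UIO}y(k)$ and substituting \eqref{eq:system} (with $f\equiv 0$) and \eqref{eq:UIOresgen}, a direct computation yields
\begin{equation*}
e(k+1)=A_{UIO}\,e(k)+[TA-A_{UIO}T-B_{UIO}^y C]x(k)+[TB-B_{UIO}^u]u(k)+TE\,d(k).
\end{equation*}
For the error to be independent of $x$, $u$, $d$ and hence to decay autonomously, the bracketed terms must vanish, giving the three decoupling conditions $TE=0$, $B_{UIO}^u=TB$, and $A_{UIO}T=TA-B_{UIO}^y C$. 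The last condition, combined with $T=I_n-D_{UIO}C$, yields $A_{UIO}=TA-KC$ with $K\doteq B_{UIO}^y-A_{UIO}D_{UIO}$.

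For necessity, I would argue as follows. The constraint $TE=0$ reads $D_{UIO}(CE)=E$, which is solvable in $D_{UIO}$ if and only if $\rank(CE)=\rank\begin{bmatrix}CE\\ E\end{bmatrix}$; since $\rank(E)=r$, this forces $\rank(CE)=r$, proving Assumption~\ref{ass:UIO1}. For the dead-beat property, $A_{UIO}=TA-KC$ must be nilpotent, which requires every unobservable mode of $(TA,C)$ to be at $0$, i.e.\ $\rank\begin{bmatrix}zI_n-TA\\ C\end{bmatrix}=n$ for every $z\ne 0$. Using $T=I_n-D_{UIO}C$, one checks by elementary column operations (adding $D_{UIO}$ times the second block row to the first block row, and including the $-E$ column) that
\begin{equation*}
\rank\begin{bmatrix}zI_n-A & -E\\ C & 0\end{bmatrix}=\rank\begin{bmatrix}zI_n-TA & -E\\ C & 0\end{bmatrix}=\rank\begin{bmatrix}zI_n-TA\\ C\end{bmatrix}+r,
\end{equation*}
where the last equality uses $\rank(CE)=r$ together with $TE=0$. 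This gives Assumption~\ref{ass:UIO3}.

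For sufficiency, Assumption~\ref{ass:UIO1} lets us explicitly construct $D_{UIO}\doteq E(CE)^\dag$ so that $D_{UIO}CE=E$, hence $TE=0$. Assumption~\ref{ass:UIO3} combined with the rank identity above shows that $(TA,C)$ is reconstructable, so a gain $K$ exists making $TA-KC$ nilpotent (its $n$ eigenvalues can all be placed at $0$). Setting $A_{UIO}\doteq TA-KC$, $B_{UIO}^y\doteq K+A_{UIO}D_{UIO}$, $B_{UIO}^u\doteq TB$ satisfies the three decoupling conditions and yields $e(k+1)=A_{UIO}e(k)$ with $A_{UIO}^n=0$, so $e(k)=0$ for every $k\ge k_0\doteq n$ irrespective of $e(0)$, $u$, $d$. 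The main obstacle, and the step I would take most care with, is the rank-identity bridge between the augmented PBH-type pencil in Assumption~\ref{ass:UIO3} and the reduced pencil for the pair $(TA,C)$; this is what lets the abstract reconstructability requirement be read off directly from the matrices $(A,E,C)$.
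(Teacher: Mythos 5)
Your overall route is the standard one and matches the paper's in spirit: derive the error dynamics, impose the decoupling constraints $\Du CE=E$, $\Buu=TB$, $\Au=TA-KC$ with $T=I_n-\Du C$, and trade nilpotency of $TA-KC$ for the rank condition of Assumption~\ref{ass:UIO3} (the paper simply delegates this last equivalence to Darouach's results, while you try to make it self-contained). The gap is precisely in the step you flag as delicate, the rank bridge. First, the operation you describe does not do what you claim: adding $\Du$ times the block row $[\,C\ \ 0\,]$ to the top block row produces $zI_n-A+\Du C$, not $zI_n-TA=zI_n-A+\Du CA$. Second, and more seriously, your intermediate identity
$\rank\left[\begin{smallmatrix} zI_n-A & -E\\ C & 0\end{smallmatrix}\right]=\rank\left[\begin{smallmatrix} zI_n-TA & -E\\ C & 0\end{smallmatrix}\right]$
is false for a general $\Du$ satisfying $\Du CE=E$, which is exactly the situation in your necessity argument (there $\Du$ is whatever the given UIO provides, not $E(CE)^\dag$). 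Concretely, take $n=3$, $p=2$, $r=1$,
\begin{equation*}
A=\begin{bmatrix}0&0&0\\0&0&1\\0&0&1\end{bmatrix},\quad
E=\begin{bmatrix}1\\0\\0\end{bmatrix},\quad
C=\begin{bmatrix}1&0&0\\0&1&0\end{bmatrix},\quad
\Du=\begin{bmatrix}1&1\\0&1\\0&0\end{bmatrix}.
\end{equation*}
Then $\Du CE=E$, $\rank(CE)=1$, $TE=\zero_3$, and $(TA,C)$ is observable, so this $\Du$ is the gain of a legitimate dead-beat UIO (the triple even satisfies Assumption~\ref{ass:UIO3}); yet at $z=1$ one computes $\rank\left[\begin{smallmatrix} I_3-A & -E\\ C & 0\end{smallmatrix}\right]=4$ while $\rank\left[\begin{smallmatrix} I_3-TA & -E\\ C & 0\end{smallmatrix}\right]=3$, so both equalities in your chain break (only their composition survives). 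The chain does hold when $\im(\Du)\subseteq\im(E)$, e.g.\ $\Du=E(CE)^\dag$, so your sufficiency direction is fine; the necessity direction as written is not.

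The fix is local, because the end-to-end identity you actually need, namely $\rank\left[\begin{smallmatrix} zI_n-A & -E\\ C & 0\end{smallmatrix}\right]=r+\rank\left[\begin{smallmatrix} zI_n-TA\\ C\end{smallmatrix}\right]$ for every $z$ and for \emph{any} $\Du$ with $TE=\zero_{n\times r}$ and $\rank(E)=r$, is true. To prove it, note that $TE=\zero$ and $\rank(T)\ge n-r$ force $\Ker(T)=\im(E)$, so the $(n+r+p)\times(n+p)$ matrix $\left[\begin{smallmatrix} T & \zero\\ E^\dag & \zero\\ \zero & I_p\end{smallmatrix}\right]$ has full column rank; left-multiplying $\left[\begin{smallmatrix} zI_n-A & -E\\ C & \zero\end{smallmatrix}\right]$ by it kills the $-E$ column (since $TE=\zero$), produces a $-I_r$ block from $-E^\dag E$, and after adding $z\Du$ times the $[\,C\ \ \zero\,]$ row to the top block row the top-left block becomes $zT-TA+z\Du C=zI_n-TA$; clearing with the $-I_r$ block gives the claimed identity. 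This is exactly the manipulation the paper itself carries out in the proof of Proposition~\ref{prop:eqconditions} (implication iii) $\Rightarrow$ i)), so you can either reproduce it here or, as the paper's own proof of this proposition does, invoke the equivalence established in the Darouach references.
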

\begin{proof}
    The proof is a slight modification of the one available for asymptotic UIOs (see \cite{Darouach2,Darouach}) and  we report it here in concise form.\\
By making use of the equations describing the system and the UIO, we deduce that the estimation error evolves according to the following difference equation:
\begin{eqnarray} 
        e(k+1) 
         &=&[(I_n-\Du C)A-\Au(I_n-\Du C) \nonumber \\
         &-&\Buy C] x(k)+ (I_n-\Du C)Ed(k) \nonumber\\
         &+& \Au e(k)+(I_n-\Du C)Bf(k) \nonumber \\
         &+& [(I_n-\Du C)B-\Buu]u(k).
\label{eq:errordyn}
\end{eqnarray}
If no fault affects the system $\Sigma$ (namely $f(k)=0$ for every $k\in {\mathbb Z}_+$), the estimation error converges to zero in a finite number of steps, independently of the initial conditions of the system and the UIO, of the input $u$ and of the disturbance $d$, if and only if 
 the matrices $A_{UIO}$, $B_{UIO}^u$, $B_{UIO}^y$ and $D_{UIO}$ satisfy the following constraints:
\begin{subequations}\label{eq:constraint}
\begin{align}
        &(I_n-\Du C)A-\Au(I_n-\Du C)=\Buy C \label{eq:constraint1}\\
        &\Buu=(I_n-\Du C)B \label{eq:constraint2}\\
        &(I_n-\Du C)E=\zero_{n \times r} \label{eq:constraint3}\\
        &\Au \ \text{nilpotent}
       \label{eq:constraint4}
\end{align}
\end{subequations}
By making use of the analysis in \cite{Darouach2,Darouach}, we can claim that conditions \eqref{eq:constraint1}$\div$\eqref{eq:constraint4} hold if and only if 
Assumption~\ref{ass:UIO} holds. 
\end{proof}

As a result of  Proposition~\ref{prop:DBUIO}, under  
Assumption~\ref{ass:UIO} 
there exist matrices $\Au$, $\Buu$, $\Buy$ and $\Du$ satisfying \eqref{eq:constraint1}$\div$\eqref{eq:constraint4} 
and hence the error dynamics becomes
$$e(k+1)=\Au e(k) + \Buu f(k),$$
where $\Au$ is nilpotent. In particular, if $f(k)=0$, the equation becomes:
$$e(k+1)=\Au e(k).$$
 This implies that after a finite number of steps the estimation error becomes a function of the fault signal only.

\begin{remark} Assumption~\ref{ass:UIO1} is 
necessary and sufficient for 
 the solvability of \eqref{eq:constraint3}.
 Once a solution to \eqref{eq:constraint3} is obtained, conditions \eqref{eq:constraint1} and \eqref{eq:constraint2} immediately follow.
The  general solution of  \eqref{eq:constraint3} is given in \cite{Darouach}. 
\\
If, in addition, 
  Assumption~\ref{ass:UIO3} holds, then  (see \cite{Darouach}, Theorem 2)   among the matrices $\Du$ satisfying \eqref{eq:constraint3} there exist some for which the pair $((I_n- \Du C)A, C)$ is reconstructable. This is always the case if  $\rank(I_n - \Du C)=n-r$ (which also means that $\Ker(I_n -\Du C) = \im(E)$). 
Lemma~\ref{lemmone} in the Appendix proves that
  if $\Du$ is a solution of \eqref{eq:constraint3} of rank $r$ (e.g., $\Du =E(CE)^\dag$), then   $\rank(I_n-\Du C)= n-r$. This implies that if we choose  a solution $\Du$ of rank $r$ for 
  \eqref{eq:constraint3}, then the pair $((I_n- \Du C)A, C)$ is reconstructable.\\
  When so,  a possible way to compute $\Au$ is the following one.  From \eqref{eq:constraint1}, we deduce that $\Au$ can be expressed as
\begin{equation}\label{eq:PA-LC}
    \begin{split}
        \Au
           &=(I_n-\Du C)A-(\Buy-\Au\Du)C\\
           &=(I_n-\Du C)A-LC,\nn
    \end{split}
\end{equation}
for  $L=\Buy-\Au\Du$. Consequently, we first choose $L$ so that $\Au = (I_n- \Du C)A - L C$ is nilpotent and then deduce $\Buy$ as
 $\Buy = L +\Au\Du$.
\end{remark}

To summarise, by making use of Theorem 2 in \cite{Darouach} and the assumption that $(A,E,C)$ is strong* reconstructable, we 
first find a solution 
$\Du$  of \eqref{eq:constraint3} of  rank $r$.
Then we determine $L$ such that $\Au = (I_n - \Du C) A - LC$ is nilpotent and finally we set   $\Buy = L +\Au \Du$ and   $\Buu = (I_n - \Du C)B$.
\\
In the sequel, we will always assume that $\rank (\Du)=r$.\\

\subsection{UIO-Based Fault Detection and Identification}
 
Under the hypothesis that $\Au$ is nilpotent, in the absence of faults, 
the estimation error $e(k)$ goes to zero in a finite number of steps $k_0$. So, if we assume that the time $k_f$ at which the fault affects the system is greater than or equal to $k_0$, then
 $e(k_f)=\zero_{n}$. Therefore, the dynamics of the estimation error for $k\ge k_f$ is
\begin{subequations}
    \begin{align}
    e(k+1)&=\Au e(k)+\Buu f(k)\label{eq:error fault}, \\
    r(k)&=Ce(k)\label{eq:residual},
    \end{align}
\end{subequations}
with zero initial condition $e(k_f)$, and for every $k>k_f$, the residual signal at time $k$ is 
\begin{equation}
    r(k)= \sum_{i= k_f}^{k-1} 
    C\Au^{k-i-1}\Buu f(i),\nn
\end{equation}
 in particular, 
    $r(k_f+1)=C\Buu f(k_f)$,
and hence, it is immediate to see that  we are able to uniquely identify every fault from the residual (with one step delay),
if and only if $C\Buu$ is FCR. It is worth  noting that
the full column rank condition on $C \Buu$ remains necessary and sufficient for the fault identification even if we   consider a longer time window. 

Indeed, if we assume as observation window $\{k_f+1,k_f+2,\dots, k_f+N\}$,
   then the family of residual signals $r(k), k\in \{k_f+1,k_f+2,\dots, k_f+N\}$, can be described as follows:
 \begin{equation}\label{eq:residualvector}
 r_N(k_f+1) = M_N f_N(k_f),\nn
\end{equation}
 where 
  \begin{equation*} 
 M_N \doteq\begin{small}\begin{bmatrix}
        C\Buu & & & \\
        C\Au\Buu&C\Buu& &\\
        \vdots&\ddots&\ddots&\\
        C\Au^{N-1}\Buu &\dots& C\Au\Buu &C\Buu
    \end{bmatrix}\end{small}
\end{equation*}
and the vectors $r_N(k_f+1)$ and $f_N(k_f)$ are defined as in \eqref{eq:piledvector}.\\
 We can identify  the vector $f_N(k_f)\in \Rbb^{Nm}$ from the residual vector $r_N(k_f+1)\in \Rbb^{Np}$ if and only if the matrix $M_N\in \Rbb^{Np \times Nm}$ is FCR and this is the case if and only if $C\Buu\in \Rbb^{p\times m}$ is FCR. 
A necessary and sufficient condition for the matrix $C\Buu$ to be FCR is given in Proposition~\ref{prop:CBuuFCR}, below.

\begin{proposition} \label{prop:CBuuFCR}
    Suppose that Assumption~\ref{ass:UIO1} holds.
    Then the following facts are equivalent:
    \begin{enumerate}[label={{\roman*)}}]
    \item\label{ass_1_prop_CBFCR} $\rank([\begin{smallmatrix} 
        CB&CE
    \end{smallmatrix}])=m+r$;
    \item \label{ass_2_prop_CBFCR}$C\Buu = C(I_n - \Du C) B$ is FCR.
    \end{enumerate}
\end{proposition}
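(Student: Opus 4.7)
The plan is to characterize $\Ker(C\Buu)$ explicitly and then match that characterization to condition (i). The key preparatory step exploits the fact that, under Assumption~\ref{ass:UIO1}, the chosen $\Du$ has rank $r$ and satisfies $(I_n-\Du C)E=\zero$, equivalently $\Du C E = E$. Pre-multiplying by $C$ gives $C\Du CE = CE$, so $\im(CE)\subseteq \im(C\Du)$. Since $\rank(C\Du)\le \rank(\Du)=r=\rank(CE)$, the inclusion is an equality:
\begin{equation*}
\im(C\Du)=\im(CE).
\end{equation*}

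Using this, I would identify the set of fixed points of the map $y\mapsto C\Du y$. On the one hand, $y=C\Du y$ immediately gives $y\in\im(C\Du)=\im(CE)$. On the other hand, if $y=CE w$ for some $w\in\Rbb^r$, then $C\Du y = C\Du CE w = CE w = y$, using the identity $C\Du CE=CE$. Hence $\{y\in\Rbb^p : y=C\Du y\}=\im(CE)$.

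Now a vector $v$ belongs to $\Ker(C\Buu)=\Ker\bigl(C(I_n-\Du C)B\bigr)$ if and only if $CBv = C\Du (CBv)$, i.e., if and only if the vector $y:=CBv$ is a fixed point of $C\Du$. By the previous step this is equivalent to $CBv\in \im(CE)$, which in turn is equivalent to the existence of $w\in\Rbb^r$ such that
\begin{equation*}
\begin{bmatrix} CB & CE \end{bmatrix}\begin{bmatrix} v \\ w \end{bmatrix} = \zero.
\end{equation*}

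With this characterization, both implications are immediate. For \ref{ass_1_prop_CBFCR}$\Rightarrow$\ref{ass_2_prop_CBFCR}: if $C\Buu v=\zero$, pick the corresponding $w$; full column rank of $[\,CB\ CE\,]$ forces $v=\zero$. For \ref{ass_2_prop_CBFCR}$\Rightarrow$\ref{ass_1_prop_CBFCR}: if $\begin{bmatrix} CB & CE\end{bmatrix}\begin{bmatrix} v \\ w\end{bmatrix}=\zero$, then $CBv=-CEw\in \im(CE)$, whence $C\Buu v=\zero$ and so $v=\zero$ by \ref{ass_2_prop_CBFCR}; substituting back gives $CEw=\zero$, and since $CE$ has rank $r$ (hence is FCR by Assumption~\ref{ass:UIO1}), we conclude $w=\zero$. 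The main subtlety, where I would be careful, is the identification of the fixed-point set of $C\Du$ with $\im(CE)$: this is not automatic from $\Du CE=E$ alone but requires the rank-$r$ choice of $\Du$ to pin down $\im(C\Du)$ exactly, rather than leaving it only as a superset of $\im(CE)$.
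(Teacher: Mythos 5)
Your proof is correct, and it reaches the result by a somewhat different (more self-contained) route than the paper. The paper disposes of the nontrivial direction \ref{ass_1_prop_CBFCR}$\Rightarrow$\ref{ass_2_prop_CBFCR} by invoking Lemma~\ref{lemmone} in the Appendix, applied in output space (with the lemma's matrices instantiated as $C$, $\Du$, $CE$ and $CB$, so that $I_n-AB$ becomes $I_p-C\Du$ and $(I_n-AB)D$ becomes $(I_p-C\Du)CB=C\Buu$): there the key identity $\Ker(I_p-C\Du)=\im(CE)$ comes from a rank count, $\rank(I_p-C\Du)=p-r$, proved via a similarity transformation exploiting $\rank(\Du)=r$, and the full column rank of $C\Buu$ is then extracted through a factorization $I_p-C\Du=LR$ with $L$ of full column rank. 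You obtain the same key subspace identity directly: $\im(C\Du)=\im(CE)$ from $C\Du CE=CE$ plus $\rank(\Du)=r$, then the fixed-point argument giving $\Ker(I_p-C\Du)=\im(CE)$, and from it the single characterization $\Ker(C\Buu)=\{v:\ CBv\in\im(CE)\}$, which yields both implications at once; your \ref{ass_2_prop_CBFCR}$\Rightarrow$\ref{ass_1_prop_CBFCR} step is essentially the paper's contrapositive argument (which only uses $(I_p-C\Du)CE=\zero$) recast in kernel language. What your version buys is an elementary, unified proof that avoids the appendix lemma entirely; what the paper's version buys is a general-purpose lemma that is reused elsewhere (e.g., to get $\rank(I_n-\Du C)=n-r$ in the remark following Proposition~\ref{prop:DBUIO}, and implicitly in Proposition~\ref{prop:eqconditions}). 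Your closing caveat is also well placed: the identity $\im(C\Du)=\im(CE)$ genuinely needs the rank-$r$ choice of $\Du$, which is exactly the standing assumption the paper makes before this subsection and the role played by condition c3) in Lemma~\ref{lemmone}.
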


\begin{proof}   
\ref{ass_1_prop_CBFCR} $\Rightarrow$ \ref{ass_2_prop_CBFCR}  follows immediately from Lemma~\ref{lemmone}.

   \ref{ass_2_prop_CBFCR}  $\Rightarrow$ \ref{ass_1_prop_CBFCR} \ Assume that $\rank([\begin{smallmatrix}
        CB&CE
    \end{smallmatrix}])$ is smaller than $m+r$. Then there exists a nonzero vector $v = [\begin{smallmatrix}
        v_B^\top & v_E^\top
    \end{smallmatrix} ]^\top$ such that
    \begin{equation}
        \begin{bmatrix}
        CB&CE
    \end{bmatrix} \begin{bmatrix} v_B\cr v_E\end{bmatrix} =\zero_p.
    \label{eq:vB1}
    \end{equation}
    Note that  $v_B$ cannot be zero, otherwise \eqref{eq:vB1} would become $CE v_E=\zero_p$ and this would contradict Assumption~\ref{ass:UIO1}.
    Consequently,
 \begin{eqnarray*}
 \zero_p &=& (I_p - C \Du) \begin{bmatrix}
        CB&CE
    \end{bmatrix} \begin{bmatrix} v_B\cr v_E\end{bmatrix} 
  =  C\Buu v_B.
 \end{eqnarray*}
    Since $v_B \ne \zero_m$, this implies that $C\Buu v_B$ is not FCR.
    \end{proof}
 \smallskip

The results so far obtained can be summarised in the following proposition.
\medskip

\begin{proposition}
    Under Assumption~\ref{ass:UIO}, there exists a residual generator, based on a dead-beat
    UIO and described as in \eqref{eq:UIOresgen}, for the discrete-time state-space model \eqref{eq:system}, that allows to uniquely identify an arbitrary fault affecting the actuators if and only if 
    $\rank\left([\begin{smallmatrix} CB& CE \end{smallmatrix}]\right)=m+r.$
\end{proposition}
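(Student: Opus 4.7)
The plan is to assemble the ingredients already developed in Section~\ref{sec:MB} rather than to introduce anything new. Under Assumption~\ref{ass:UIO}, Proposition~\ref{prop:DBUIO} guarantees the existence of matrices $\Au,\Buu,\Buy,\Du$ satisfying the constraints \eqref{eq:constraint1}$\div$\eqref{eq:constraint4}, and the remark that follows allows us to pick $\Du$ of rank $r$ so that $((I_n-\Du C)A,C)$ is reconstructable. This gives a dead-beat UIO as the starting point, with nilpotent $\Au$.

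First I would reuse the observation that, because $\Au$ is nilpotent, in the absence of faults the estimation error reaches zero after a finite number $k_0$ of steps, regardless of initial conditions, inputs, and disturbances. Assuming, as in the problem statement, that the fault appears at some $k_f\ge k_0$, the error dynamics for $k\ge k_f$ reduces to $e(k+1)=\Au e(k)+\Buu f(k)$ with $e(k_f)=\zero_n$, and the residual is $r(k)=Ce(k)$, so every residual sample is a linear function of the past fault samples alone.

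Next I would argue that unique identification of arbitrary fault sequences from the residual is equivalent to $C\Buu$ being FCR. The \emph{only if} direction is immediate from $r(k_f+1)=C\Buu f(k_f)$: any nonzero vector in $\Ker(C\Buu)$ would yield an undetectable one-step fault. More generally, the Toeplitz matrix $M_N$ displayed before Proposition~\ref{prop:CBuuFCR} is FCR if and only if $C\Buu$ is, so no lengthening of the observation window helps. The \emph{if} direction is equally direct: from $r(k_f+1)=C\Buu f(k_f)$ one recovers $f(k_f)$ by left-inversion, and then $f(k_f+j)$ is obtained recursively from $r(k_f+j+1)$ by subtracting the already-identified contribution of $f(k_f),\dots,f(k_f+j-1)$.

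Finally I would invoke Proposition~\ref{prop:CBuuFCR} to translate the FCR condition on $C\Buu$ into the intrinsic condition $\rank([\begin{smallmatrix}CB & CE\end{smallmatrix}])=m+r$. The only subtlety worth double-checking is that the equivalence in Proposition~\ref{prop:CBuuFCR} does not depend on the specific choice of $\Du$, so the rank condition characterises identifiability for \emph{any} admissible dead-beat UIO (of the kind produced by the design procedure summarised at the end of Section~\ref{sec3}.A), not only for a particular realisation. I do not foresee a genuine obstacle here: the statement is a synthesis of Proposition~\ref{prop:DBUIO}, the error-dynamics analysis, and Proposition~\ref{prop:CBuuFCR}, and the main task is simply to state the chain of equivalences cleanly.
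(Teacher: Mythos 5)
Your proposal is correct and follows essentially the same route as the paper, which obtains this proposition precisely by combining Proposition~\ref{prop:DBUIO} (existence of the dead-beat UIO under Assumption~\ref{ass:UIO}), the error/residual analysis showing that identifiability over any window is equivalent to $C\Buu$ being FCR (via the Toeplitz matrix $M_N$), and Proposition~\ref{prop:CBuuFCR} translating this into $\rank([\begin{smallmatrix} CB& CE \end{smallmatrix}])=m+r$. Your remark that the condition is independent of the particular admissible $\Du$ is exactly why the paper states the criterion intrinsically in terms of $CB$ and $CE$, so no gap remains.
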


It is worth noting that Assumption~\ref{ass:UIO1}
is encompassed in condition $\rank\left([\begin{smallmatrix} CB& CE \end{smallmatrix}]\right)=m+r.$ Therefore, an alternative way to state the previous result is the following one. In the sequel, we will steadily refer to this version of the problem solution in the model-based approach.

\begin{proposition}\label{prop:finalMB}
    The following conditions are equivalent:
    \begin{itemize}
        \item[i)] There exists a residual generator, based on a dead-beat
    UIO and described as in \eqref{eq:UIOresgen}, for the discrete-time state-space model \eqref{eq:system}, that allows to uniquely identify an arbitrary fault affecting the actuators;
    \item[ii)] Assumption~\ref{ass:UIO3} holds and 
    $\rank\left([\begin{smallmatrix} CB& CE \end{smallmatrix}]\right)=m+r.$
    \end{itemize}
\end{proposition}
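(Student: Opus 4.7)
The plan is to reduce Proposition~\ref{prop:finalMB} to the unnumbered proposition immediately preceding it, by showing that the single rank condition $\rank([\begin{smallmatrix}CB & CE\end{smallmatrix}])=m+r$ already implies Assumption~\ref{ass:UIO1}. Combined with Assumption~\ref{ass:UIO3}, this reconstructs the full Assumption~\ref{ass:UIO}, after which Propositions~\ref{prop:DBUIO} and~\ref{prop:CBuuFCR} do all the remaining work.

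For the direction (ii)$\Rightarrow$(i), I start from the rank condition and observe that $CE\in\Rbb^{p\times r}$ appears as a block of a matrix of rank $m+r$; if its $r$ columns were linearly dependent, the total rank of $[\begin{smallmatrix}CB & CE\end{smallmatrix}]$ could not exceed $m+r-1$. Hence $\rank(CE)=r$, and since the standing hypothesis $\rank(E)=r$ is in force, Assumption~\ref{ass:UIO1} holds. Together with Assumption~\ref{ass:UIO3}, this yields Assumption~\ref{ass:UIO}, so Proposition~\ref{prop:DBUIO} produces a dead-beat UIO of the form \eqref{eq:UIOresgen}. Proposition~\ref{prop:CBuuFCR} then guarantees that $C\Buu$ is FCR, and the preceding analysis of the map $r_N(k_f+1)=M_N f_N(k_f)$ shows that any fault can be uniquely identified from the residual.

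For (i)$\Rightarrow$(ii), I invoke Proposition~\ref{prop:DBUIO} to deduce that the existence of a dead-beat UIO forces both parts of Assumption~\ref{ass:UIO}, so in particular Assumption~\ref{ass:UIO3} holds (and Assumption~\ref{ass:UIO1} as well, which is what makes Proposition~\ref{prop:CBuuFCR} applicable). The fault-identification requirement translates, via the same analysis of $M_N$, into $C\Buu$ being FCR; applying the (ii)$\Rightarrow$(i) implication of Proposition~\ref{prop:CBuuFCR} then yields $\rank([\begin{smallmatrix}CB & CE\end{smallmatrix}])=m+r$.

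There is no substantive obstacle here: the proposition is essentially a repackaging of the unnumbered proposition that precedes it, and the only genuinely new observation is the elementary linear-algebra step showing that $\rank([\begin{smallmatrix}CB & CE\end{smallmatrix}])=m+r$ automatically implies $\rank(CE)=r$. I would therefore keep the write-up to a few lines, cite Propositions~\ref{prop:DBUIO} and~\ref{prop:CBuuFCR} for the heavy lifting, and spell out only the absorption of Assumption~\ref{ass:UIO1} into the single rank condition.
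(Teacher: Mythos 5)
Your proposal is correct and follows essentially the same route as the paper: the paper itself justifies Proposition~\ref{prop:finalMB} merely by noting that $\rank\left([\begin{smallmatrix} CB& CE \end{smallmatrix}]\right)=m+r$ encompasses Assumption~\ref{ass:UIO1}, and then repackages the preceding unnumbered proposition, which in turn rests on Proposition~\ref{prop:DBUIO}, Proposition~\ref{prop:CBuuFCR} and the $M_N$ analysis exactly as you do. Your only addition is spelling out the elementary observation that the block $CE$ must have rank $r$, which is precisely the absorption step the paper leaves implicit.
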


\begin{remark} \label{rem:threshold}
 The system model we have considered introduces the simplifying assumption that the disturbance $d$ and the fault $f$ affect only the state update. Moreover, the disturbance effect on the state dynamics is weighted by the matrix $E$, while the actuator fault  is weighted by the state to input matrix $B$. 
 If, in addition,   unmodeled noise   affects the system dynamics or the residual generation,
 the FD  must be implemented through a threshold mechanism that is based on the estimated maximum norm that the residual component that depends on such noise can exhibit.
 In fact, we may assume that if the residual norm is below a certain threshold it is treated as zero, and the first time 
 $k= K^*$ the residual norm exceeds the threshold this means that at the previous time instant there was a fault.  When so, it makes sense to estimate the fault affecting the system based on the residuals collected on the finite window $[K^*, K^*+N-1]$ as
 $\hat f_N(K^*-1) = {\rm argmin}_f\| r_N(K^*) - M_N f\|_2$.
\end{remark}

\section{Data-Driven Approach} \label{sec:DD}
We assume to have collected offline input, state and output measurements from the system (affected by disturbances but not by faults) on a finite time window   of (sufficiently large) length $T$: $u_d \doteq \{u_d(k)\}_{k=0}^{T-2}$, $x_d \doteq\{x_d(k)\}_{k=0}^{T-1}$ and $y_d \doteq \{y_d(k)\}_{k=0}^{T-1}$.
Accordingly, we set
\begin{subequations}
\begin{align*}
    U_P&=\begin{bmatrix}
        u_d(0),\dots,u_d(T-2)
    \end{bmatrix}\in \Rbb^{m \times(T-1)}\\
      X_P&=\begin{bmatrix}
        x_d(0),\dots,x_d(T-2)
    \end{bmatrix}\in \Rbb^{n \times(T-1)}\\
        Y_P&=\begin{bmatrix}
        y_d(0),\dots,y_d(T-2)
    \end{bmatrix}\in \Rbb^{p \times(T-1)}\\
      X_f&=\begin{bmatrix}
        x_d(1),\dots,x_d(T-1)
    \end{bmatrix}\in \Rbb^{n \times(T-1)}\\
    Y_f&=\begin{bmatrix}
        y_d(1),\dots,y_d(T-1)
    \end{bmatrix}\in \Rbb^{p \times(T-1)}
    \end{align*}
\end{subequations}
Note that it not uncommon to assume that the state variable is accessible 
during the preliminary data collection process \cite{TAC-UIOdisaro,Ferrari-Trecate,DePersisTesi2020}. 
Moreover, if this were not the case,  state estimation based on data  could not  be possible, as  the   input and output data  
do not provide information about 
  the specific state realization $\Sigma$.

Also, even if we assume that  no  direct measurements of the disturbance sequence $d_d \doteq \{d_d(k)\}_{k=0}^{T-2}$ associated with the historical data is available, nonetheless, for the subsequent analysis, it is convenient to introduce the symbol
$$D_P=\begin{bmatrix}
        d_d(0),\dots,d_d(T-2)
    \end{bmatrix}\in \Rbb^{r \times(T-1)}.$$
In the sequel, we will make the following:
\begin{assumption}\label{ass:PE} The dimension $r$ of the disturbance vector $d$ is known\footnote{It is worthwhile remarking that $r$ can be estimated from data. Indeed, it coincides with the maximum value that the quantity $\rank\left(\left[\begin{smallmatrix} U_p
\cr  X_p\cr X_f\end{smallmatrix} \right]\right) - (n+m)$ can take,
based on  data collected through different experiments.}, and the $(m+r+n)\times (T-1)$ matrix 
    \begin{equation}
    \begin{bmatrix}
                U_p\\D_p\\ X_p
    \end{bmatrix}\ \text{is FRR}.\nn
    \end{equation}
    \end{assumption}
Note that this assumption holds true, in particular,  if the pair $(A, [B \ E ])$ is controllable and the sequence $\{u_d,d_d\}$ is persistently exciting of order $n+1$ (see \cite{vanWaardeWillems}, Theorem 1).
\smallskip

We want to prove that, under Assumption~\ref{ass:PE}, 
the two necessary and sufficient conditions 
for the existence of a residual generator, based on a dead-beat UIO, that we derived in Proposition~\ref{prop:finalMB} via a model-based approach find  immediate counterparts in terms of the data matrices $U_p,X_p, X_f, Y_p$ and $Y_f$.
This means that, under Assumption \ref{ass:PE}, the solvability conditions derived in a data-driven set-up are not more restrictive than those  one needs to verify when starting from the system matrices.  
\medskip

\begin{proposition}\label{prop:DDassumptions}
Under Assumption~\ref{ass:PE}, the following facts are equivalent:
\begin{enumerate}[label={{\roman*)}}]
\item \label{ass:1_prop_DDass}Assumption~\ref{ass:UIO3} holds and $\rank\left([\begin{smallmatrix} CB& CE \end{smallmatrix}]\right)=m+r$;
\item \label{ass:2_prop_DDass} The following conditions on the data matrices hold:
\begin{enumerate}[label={{ii\alph*)}}]
    \item\label{ass:2a_prop_DDass}
    \begin{align*}
  \rank\left(\begin{bmatrix}
        zX_p-X_f\\
        Y_p\\
        U_p
    \end{bmatrix}\right)=n+r+m,\ \forall z\in {\mathbb C}\setminus \{0\};\nn
\end{align*}
\item\label{ass:2b_prop_DDass}
\begin{align}
  \rank\left(\begin{bmatrix}
        X_p\\
        Y_f
    \end{bmatrix}\right)=n+r+m.\nn
\end{align}
\end{enumerate}
\end{enumerate}
\end{proposition}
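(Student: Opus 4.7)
The plan is to exploit the identities
\[
X_f \;=\; A X_p + B U_p + E D_p, \qquad Y_p = C X_p, \qquad Y_f = C X_f,
\]
to rewrite each of the two data matrices appearing in condition \ref{ass:2_prop_DDass} as a product of a matrix built from $A,B,C,E$ and the ``excitation'' block $\begin{bmatrix} U_p^\top & D_p^\top & X_p^\top\end{bmatrix}^\top$. Since Assumption~\ref{ass:PE} says that the latter is FRR, post-multiplication by it preserves rank, and the data-based rank conditions collapse into conditions on the system matrices only. The whole argument is therefore a matter of recognising the right block factorisations and then performing elementary row and column operations.

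Concretely, for condition \ref{ass:2a_prop_DDass} I would write
\[
\begin{bmatrix} z X_p - X_f \\ Y_p \\ U_p \end{bmatrix}
= \begin{bmatrix} z I_n - A & -E & -B \\ C & \zero & \zero \\ \zero & \zero & I_m \end{bmatrix}
\begin{bmatrix} X_p \\ D_p \\ U_p \end{bmatrix}.
\]
Under Assumption~\ref{ass:PE} the right factor is FRR, so for every $z\in\mathbb{C}$ the rank of the left-hand side coincides with that of the left factor. Using the identity block $I_m$ in the last block-row to clear $-B$ in the first block-row (a column operation), the left factor has the same rank as $\mathrm{diag}\bigl(\bigl[\begin{smallmatrix} z I_n - A & -E \\ C & \zero \end{smallmatrix}\bigr],\, I_m\bigr)$, i.e.\
\[
\rank\!\begin{bmatrix} z X_p - X_f \\ Y_p \\ U_p \end{bmatrix}
= m + \rank\!\begin{bmatrix} z I_n - A & -E \\ C & \zero \end{bmatrix},
\]
so the condition \ref{ass:2a_prop_DDass} holds for all $z\neq 0$ iff Assumption~\ref{ass:UIO3} holds.

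For condition \ref{ass:2b_prop_DDass} I would similarly factor
\[
\begin{bmatrix} X_p \\ Y_f \end{bmatrix}
= \begin{bmatrix} I_n & \zero & \zero \\ C A & C E & C B \end{bmatrix}
\begin{bmatrix} X_p \\ D_p \\ U_p \end{bmatrix},
\]
and again invoke Assumption~\ref{ass:PE} to equate the rank of the left-hand side with that of the left factor. Clearing $C A$ with the top block-row (a row operation) yields rank $n + \rank\bigl([\,C B\ \ C E\,]\bigr)$, so \ref{ass:2b_prop_DDass} is equivalent to $\rank([\,C B\ \ C E\,]) = m + r$. Combining the two equivalences gives \ref{ass:1_prop_DDass} $\Leftrightarrow$ \ref{ass:2_prop_DDass}.

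The routine part is bookkeeping of the block-row/column operations and permutations of the ordering $(U_p, D_p, X_p)$ versus the one $(X_p, D_p, U_p)$ used above; Assumption~\ref{ass:PE} is invariant under such permutations and hence causes no trouble. The only point that deserves care — and that I would expect to be the main conceptual obstacle — is justifying that multiplying by the $(m+r+n)\times(T-1)$ FRR block from the right preserves rank simultaneously and uniformly in the complex parameter $z$; this follows because a FRR matrix admits a right inverse, so for every fixed $z$ the rank of the product equals the rank of the left factor. Once this is clear, the rest is elementary linear algebra.
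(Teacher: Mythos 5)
Your proof is correct and follows essentially the same route as the paper: the same block factorisations of $\bigl[\begin{smallmatrix} zX_p-X_f\\ Y_p\\ U_p\end{smallmatrix}\bigr]$ and $\bigl[\begin{smallmatrix} X_p\\ Y_f\end{smallmatrix}\bigr]$ through the FRR stack of $(U_p,D_p,X_p)$ (up to a harmless block permutation), followed by elementary block eliminations reducing the left factors to $\bigl[\begin{smallmatrix} zI_n-A & -E\\ C & \zero\end{smallmatrix}\bigr]$ and $[\begin{smallmatrix} CB & CE\end{smallmatrix}]$. The only nit is that clearing $-B$ via the $I_m$ block is a row operation rather than a column operation, which does not affect the rank argument.
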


\begin{proof} We first prove that Assumption~\ref{ass:UIO3} is equivalent to \ref{ass:2a_prop_DDass}. 
Since the data are derived from the state-space model 
\eqref{eq:system}, it follows that they satisfy the state equation and hence
\begin{align}\label{eq:data_BEA}
    X_f=AX_p+BU_p+ED_p.
\end{align}
Consequently,
\begin{align*}
    zX_p-X_f&=(zI_n-A)X_p-BU_p-ED_p
\\&= \begin{bmatrix}
        -B&-E&zI_n-A
    \end{bmatrix}\begin{bmatrix}
        U_p\\D_p\\X_p
    \end{bmatrix}.
\end{align*}
This implies
\begin{align}
    \begin{bmatrix}
        zX_p-X_f\\
        Y_p\\
        U_p
    \end{bmatrix}= \begin{bmatrix}
        -B&-E&zI_n-A\\
        \zero&\zero&C\\
        I_m&\zero&\zero
    \end{bmatrix}\begin{bmatrix}
        U_p\\D_p\\X_p
    \end{bmatrix}.\nn
\end{align}
By Assumption~\ref{ass:PE} the matrix of data on the right is of full row rank $m+r+n$, therefore condition \ref{ass:2a_prop_DDass} holds if and only if 
 $$\rank \left(\begin{bmatrix}
        -B&-E&zI_n-A\\
        \zero&\zero&C\\
        I_m&\zero&\zero
    \end{bmatrix}\right) = m+r+n,\ \forall z\neq0,$$
    which is equivalent to Assumption~\ref{ass:UIO3}.
    \smallskip
    \noindent We now show that $\left[\begin{smallmatrix} CB & CE \end{smallmatrix}\right]$ is FCR if and only if \ref{ass:2b_prop_DDass} holds.
    Making use, again, of the fact that the data are generated by the system \eqref{eq:system}, we can write:
   \begin{equation}
\begin{bmatrix}
    X_p\\Y_f
\end{bmatrix}= \begin{bmatrix}
     \zero_{n \times m} &\zero_{n \times r}&I_n\\
    CB &CE& CA
\end{bmatrix}\begin{bmatrix}
    U_p\\D_p\\X_p
\end{bmatrix}.\nn
\end{equation}
As the last matrix is FRR by Assumption~\ref{ass:PE},
condition \ref{ass:2b_prop_DDass} holds if and only if 
$$\rank \left(\begin{bmatrix}
     \zero_{n \times m} &\zero_{n \times r}&I_n\\
    CB &CE& CA
\end{bmatrix}\right) = m+r+n,$$
which in turn holds true if and only if 
$\left[\begin{smallmatrix} CB & CE \end{smallmatrix}\right]$ is FCR.
\end{proof}
\smallskip

The previous proposition is extremely useful because it allows one to immediately check from data if the problem is solvable. We now provide the path to explicitly determine a solution, provided that the aforementioned conditions hold.

\begin{proposition}\label{prop:eqconditions}
Under Assumption~\ref{ass:PE}, the following facts are equivalent:
 \begin{enumerate}[label={{\roman*)}}]
 \item \label{hp:1_prop_eqcond}
 Assumption~\ref{ass:UIO3} holds and $\rank\left([\begin{smallmatrix} CB& CE \end{smallmatrix}]\right)=m+r$;
\item \label{hp:2_prop_eqcond}
There exist real constant matrices $T_1$, $T_3$ and $T_4$, of suitable dimensions, such that $(T_3,C)$   is reconstructable, $\rank (CT_1) = m$, and
\begin{equation}\label{eq:DDsolution0}
    X_f=\begin{bmatrix}
      T_1&  T_3& T_4
    \end{bmatrix}\begin{bmatrix}
             U_p\\
          X_p\\
          Y_f
    \end{bmatrix};
\end{equation}
\item \label{hp:3_prop_eqcond}
There exist real constant matrices $T_1$, $T_2$, $T_3$ and $T_4$, of suitable dimensions, with $T_3$   nilpotent and $\rank (CT_1) = m$, such that
\begin{equation}\label{eq:DDsolution}
    X_f=\begin{bmatrix}
      T_1& T_2& T_3& T_4
    \end{bmatrix}\begin{bmatrix}
             U_p\\
          Y_p\\
          X_p\\
          Y_f
    \end{bmatrix}.
\end{equation}
\end{enumerate}
\end{proposition}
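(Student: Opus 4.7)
My plan is to prove the three-way equivalence cyclically, (i) $\Rightarrow$ (iii) $\Rightarrow$ (ii) $\Rightarrow$ (i). The common thread in all three directions is that, since the historical data come from $\Sigma$, I can freely invoke the identities $X_f = AX_p + BU_p + ED_p$, $Y_p = CX_p$ and $Y_f = CX_f$ as algebraic relations among the data matrices, and in the reverse direction exploit the full row rank granted by Assumption~\ref{ass:PE} to extract system-level identities from any single data-level identity.

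For (i) $\Rightarrow$ (iii) I would first use Proposition~\ref{prop:DBUIO} and the construction described after Assumption~\ref{ass:UIO} to obtain a rank-$r$ matrix $\Du$ satisfying $(I_n - \Du C)E = \zero$ and a gain $L$ making $\Au := (I_n - \Du C)A - LC$ nilpotent, then set $\Buu := (I_n - \Du C)B$. Left-multiplying the system identity by $(I_n - \Du C)$ kills the disturbance block; rewriting $(I_n - \Du C)A$ as $\Au + LC$ and $CX_p$ as $Y_p$, and then moving $\Du CX_f = \Du Y_f$ to the right, yields $X_f = \Buu U_p + L Y_p + \Au X_p + \Du Y_f$, which is exactly \eqref{eq:DDsolution} with $T_1 = \Buu$, $T_2 = L$, $T_3 = \Au$, $T_4 = \Du$; the nilpotency of $T_3$ is automatic and $\rank(CT_1) = m$ follows from Proposition~\ref{prop:CBuuFCR}. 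For (iii) $\Rightarrow$ (ii) I would simply absorb $T_2 Y_p = T_2 C X_p$ into the $X_p$ term and set $T_3' := T_3 + T_2 C$; reconstructability of $(T_3', C)$ follows from the nilpotency of $T_3 = T_3' - T_2 C$ by taking observer gain $K = T_2$, and the conditions on $T_1$ are unchanged.

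The substantive direction is (ii) $\Rightarrow$ (i). The key move is substituting $Y_f = C(AX_p + BU_p + ED_p)$ into \eqref{eq:DDsolution0} and rearranging into a single identity $[\,B - T_1 - T_4 CB\ \ E - T_4 CE\ \ A - T_3 - T_4 CA\,]\,[U_p^\top\ D_p^\top\ X_p^\top]^\top = \zero$. Assumption~\ref{ass:PE} then forces the three blocks to vanish, giving $T_1 = (I_n - T_4 C)B$, $(I_n - T_4 C)E = \zero$, and $T_3 = (I_n - T_4 C)A$. From $E = T_4 CE$ I recover Assumption~\ref{ass:UIO1}, since $\rank(E) \le \rank(CE) \le \rank(E) = r$. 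Treating $T_4$ as $\Du$ and using the reconstructability of $(T_3, C) = ((I_n - T_4 C)A, C)$ to pick $L$ with $\Au := T_3 - LC$ nilpotent, all four UIO constraints \eqref{eq:constraint1}--\eqref{eq:constraint4} are met, so Proposition~\ref{prop:DBUIO} delivers Assumption~\ref{ass:UIO3}; and $\rank(CT_1) = \rank(C(I_n - T_4 C)B) = m$ combined with Proposition~\ref{prop:CBuuFCR} gives $\rank([CB\ CE]) = m+r$. The main obstacle is conceptual rather than computational: one has to recognise that the three data-level identities forced by the PE condition are precisely the UIO constraints of Section~\ref{sec:MB}, so Assumption~\ref{ass:UIO3} is recovered indirectly through Proposition~\ref{prop:DBUIO} rather than by trying to verify the original PBH-style rank condition directly from data.
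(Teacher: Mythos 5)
Your proof is correct, but the hard direction is organized differently from the paper's. The forward constructions coincide: like the paper, you realize $(T_1,T_2,T_3,T_4)$ as $((I_n-\Du C)B, L, (I_n-\Du C)A-LC, \Du)$ with $\rank(\Du)=r$, and your (iii)$\Rightarrow$(ii) absorption $T_3'=T_3+T_2C$ (reconstructable because $T_3'-T_2C$ is nilpotent) is just the paper's (ii)$\Rightarrow$(iii) step read backwards. The genuine difference is in recovering Assumption~\ref{ass:UIO3}. The paper works from (iii): it extracts the identity \eqref{eq:BEA} via the full-row-rank data matrix, writes $T_3=(I_n-T_4C)A-T_2C$, and then verifies the PBH-type rank condition \emph{directly}, through an explicit chain of rank-preserving multiplications exploiting that $zI_n-T_3$ is nonsingular for $z\neq 0$. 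You instead work from (ii): after forcing $T_1=(I_n-T_4C)B$, $T_4CE=E$, $T_3=(I_n-T_4C)A$ from Assumption~\ref{ass:PE} (the same mechanism as the paper's \eqref{eq:BEA}), you use the reconstructability hypothesis to pick $L$ with $T_3-LC$ nilpotent, check that the resulting matrices satisfy \eqref{eq:constraint1}$\div$\eqref{eq:constraint4}, and invoke Proposition~\ref{prop:DBUIO} to conclude that Assumption~\ref{ass:UIO} (hence \ref{ass:UIO3}) holds. This is logically sound and not circular, and it is shorter; what it buys the paper to do it the other way is self-containedness, since Proposition~\ref{prop:DBUIO}'s "only if" part is only justified by citation to the Darouach analysis, whereas the paper's rank chain proves Assumption~\ref{ass:UIO3} from the data identity without that detour. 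Your final step, deducing $\rank([\begin{smallmatrix}CB & CE\end{smallmatrix}])=m+r$ from $\rank(CT_1)=m$ via Proposition~\ref{prop:CBuuFCR} with $\Du=T_4$, matches the paper's argument (which repeats the same computation rather than citing it), and is legitimate because only $T_4CE=E$ is needed in that direction.
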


\begin{proof} We preliminarily observe that
  since  data are generated by the original system, we can claim that 
  \begin{equation}
\label{datalongvsshort}
\begin{bmatrix}
    U_p\\Y_p\\X_p\\Y_f
\end{bmatrix}= \begin{bmatrix}
     I_m &\zero_{m \times r}& \zero_{m \times n}\\
    \zero_{p \times m} &\zero_{p \times r}& C\\
    \zero_{n \times m} &\zero_{n \times r}&I_n\\
    CB &CE& CA
\end{bmatrix}\begin{bmatrix}
    U_p\\D_p\\X_p
\end{bmatrix},
\end{equation}
where the last matrix is of full row rank by Assumption~\ref{ass:PE}.
\smallskip

\noindent \ref{hp:1_prop_eqcond} $\Rightarrow$ \ref{hp:2_prop_eqcond} If $\rank \left(\left[\begin{smallmatrix} CB& CE \end{smallmatrix}\right]\right) = m+r$, then  $\rank (CE)=r$ and therefore not only Assumption~\ref{ass:UIO3} but also also Assumption~\ref{ass:UIO1} holds. Hence (see the previous section) there exists $\Du$ of rank $r$ such that $\Du CE =E$, 
and hence  $((I_n - \Du C)A,C)$ is reconstructable.
Since the data obey  \eqref{eq:data_BEA}, it follows that
$$(I - \Du C) X_f= (I_n - \Du C)A X_p + (I_n - \Du C) BU_p,$$
and hence \eqref{eq:DDsolution0} holds for 
 $${\small \begin{bmatrix}
        T_1 &T_3&T_4
    \end{bmatrix} = \begin{bmatrix}
            (I_n - \Du C)B & (I_n - \Du C)A & \Du
    \end{bmatrix}.}$$
    Clearly, $(T_3, C) = ((I_n - \Du C)A, C)$ is reconstructable. On the other hand, as $\rank \left(\left[\begin{smallmatrix} CB& CE \end{smallmatrix}\right]\right) = m+r$, by making use of Proposition 
\ref{prop:CBuuFCR} (\ref{ass_1_prop_CBFCR} $\Rightarrow$ \ref{ass_2_prop_CBFCR}) we can claim that $CT_1 = C(I_n - \Du C)B$ is FCR.
\smallskip

\noindent \ref{hp:2_prop_eqcond} $\Rightarrow$ \ref{hp:3_prop_eqcond}\ 
Suppose that $(\bar T_1, \bar T_3, \bar T_4)$ is a particular solution of \eqref{eq:DDsolution0} with $(\bar T_3, C)$ reconstructable and $C\bar T_1$ FCR,  and let $L$ be a matrix such that $\bar T_3 - LC$ is nilpotent. Then it is immediate to see that
$$
    X_f=\begin{bmatrix}
      \bar T_1& L& \bar T_3- LC & \bar T_4
    \end{bmatrix}\begin{bmatrix}
             U_p\\
          Y_p\\
          X_p\\
          Y_f
    \end{bmatrix}
$$ 
and hence \ref{hp:3_prop_eqcond} holds
for $(T_1,T_2,T_3,T_4)= (\bar T_1,L,\bar T_3 - LC,\bar T_4)$.
\smallskip

\noindent  \ref{hp:3_prop_eqcond} $\Rightarrow$ \ref{hp:1_prop_eqcond}\ 
If equation \eqref{eq:DDsolution} admits a solution $(T_1,T_2,T_3,T_4)$, then, by making use of  \eqref{datalongvsshort} and \eqref{eq:data_BEA} we can claim that
\begin{eqnarray*}
    X_f&=&
 {\small \begin{bmatrix}
      T_1& T_2& T_3& T_4
    \end{bmatrix}  \begin{bmatrix}
     I_m &\zero_{m \times r}& \zero_{m \times n}\\
    \zero_{p \times m} &\zero_{p \times r}& C\\
    \zero_{n \times m} &\zero_{n \times r}&I_n\\
    CB &CE& CA
\end{bmatrix} \begin{bmatrix}
    U_p\\D_p\\X_p
\end{bmatrix}} \\
&=& \begin{bmatrix} B & E & A \end{bmatrix} \begin{bmatrix}
    U_p\\D_p\\X_p
\end{bmatrix}.
\end{eqnarray*}
Since $ \left[\begin{smallmatrix}
    U_p\\D_p\\X_p
\end{smallmatrix}\right]$ is FRR, this implies that 
\begin{equation}
    \label{eq:BEA}
\begin{bmatrix}
      T_1& T_2& T_3& T_4
    \end{bmatrix}
    \begin{bmatrix}
     I_m &\zero& \zero\\
    \zero &\zero& C\\
    \zero &\zero&I_n\\
    CB &CE& CA
\end{bmatrix}
= \begin{bmatrix} B & E & A \end{bmatrix},
\end{equation}
which implies, in particular, that $T_4 CE =E$ (or equivalently $(I_n-T_4C)E =\zero_{n\times r}$) which proves \ref{ass:UIO1}.
\\ On the other hand, the nilpotent matrix $T_3$ can be expressed as $T_3= (I_n-T_4C)A - T_2 C.$
To prove that Assumption~\ref{ass:UIO3} holds, observe that
\begin{eqnarray*}
    n \!\!\!&+ &\!\!\! r \ge  \rank \left(\begin{bmatrix} z I_n -A & - E\cr C &\zero_{p \times r}\end{bmatrix}\right) \\
    &\ge&\!\!\! 
    \rank \left(\begin{bmatrix}
        I_n - T_4 C & \zero_{n \times p}
        \cr
        E^\dag & \zero_{r \times p}\cr \zero_{p \times n} &I_p
    \end{bmatrix}\begin{bmatrix} z I_n -A & - E\cr C &\zero_{p\times r}\end{bmatrix}\right)\\
  &=& \!\!\!
    \rank \left(\begin{bmatrix}
        (I_n - T_4 C)z - T_2C - T_3 & \zero_{n \times r}
        \cr
        * & - I_r\cr C &\zero_{p \times r}\end{bmatrix}\right)  \\
     &=& \!\!\!
   {\small \rank \left(
    \begin{bmatrix} I_n &\zero& - T_2 -T_4 z\cr \zero & I_r &\zero
    \cr \zero & \zero &I_p
\end{bmatrix}
    \begin{bmatrix}
        z I_n -  T_3 & \zero
        \cr
        * & - I_r\cr C &\zero\end{bmatrix}\right)}    \\
        &=& \!\!\!
    \rank \left(
    \begin{bmatrix}
        z I_n -  T_3 & \zero_{n \times r}
        \cr
        * & - I_r\cr C &\zero_{p \times r}\end{bmatrix}\right) = n+r, \forall \ z\ne 0.
\end{eqnarray*}
Finally, we want to prove that $\rank (CT_1)=m$ implies $\rank \left(\left[\begin{smallmatrix} CB& CE \end{smallmatrix}\right]\right) = m+r$.
From \eqref{eq:BEA} one easily gets 
\begin{equation*}
\begin{bmatrix}
      CT_1& C T_4
    \end{bmatrix}\begin{bmatrix}
     I_m &\zero_{m \times r}\\
    CB &CE
\end{bmatrix} = \begin{bmatrix} CB & CE  \end{bmatrix},
\end{equation*}
and hence
\begin{equation}
\begin{bmatrix}
      CT_1& \zero_{p\times r}
    \end{bmatrix} = (I_p - C T_4) \begin{bmatrix}
    CB &CE
\end{bmatrix}.\nn
\end{equation}
The rest of the proof is analogous to the one of \ref{ass_2_prop_CBFCR}  $\Rightarrow$ \ref{ass_1_prop_CBFCR} in Proposition~\ref{prop:CBuuFCR}.

\end{proof}
\medskip

It is worthwhile at this point to summarize the outcome of the previous analysis. In Section~\ref{sec3} we have proved (see Proposition~\ref{prop:finalMB}) that a residual generator, based on a dead-beat
UIO and described as in \eqref{eq:UIOresgen}, exists if and only if Assumption~\ref{ass:UIO3} and $\rank \left(\left[\begin{smallmatrix} CB& CE \end{smallmatrix}\right]\right) = m+r$ hold. In Proposition~\ref{prop:DDassumptions}
we showed how such conditions can be equivalently checked on the available data. Finally, Proposition~\ref{prop:eqconditions} tells us that such conditions are equivalent to the existence of a solution $(T_1, T_2, T_3,T_4)$
of \eqref{eq:DDsolution}, with $T_3$   nilpotent and $\rank (CT_1) = m$.
By resorting to the results derived in
 \cite{TAC-UIOdisaro} (see, also, \cite{Ferrari-Trecate}), 
we can immediately determine
the matrices  $\Au$, $\Buu$, $\Buy$ and $\Du$ of the (dead-beat) UIO \eqref{eq:UIOresgenstate}$\div$\eqref{eq:UIOresgenout}  as
\begin{equation}\label{eq:UIOmat_DD}
\begin{array}{rl}
\Au= T_3, \quad &  \Du = T_4, \cr 
\Buu = T_1, \quad& \Buy = T_2 + T_3T_4.
\end{array}
\end{equation}
Moreover, as a consequence of Assumption~\ref{ass:PE}, we can claim that $X_p$ is FRR, and hence by exploiting the relationship $Y_p = C X_p$ we can uniquely identify $C$ as $C = Y_p X_p^\dag$. This allows us to also determine the expression of the residual \eqref{eq:UIOresgenres}.

Proposition~\ref{prop:eqconditions} together with identities \eqref{eq:UIOmat_DD} allow to derive the matrices of the 
desired residual generator. However, 
once the necessary and sufficient conditions for the problem solvability have been checked on the historical data, 
it is not obvious how to identify, among all the solutions 
$(T_1, T_2,T_3, T_4)$ of 
\eqref{eq:DDsolution}, one with
 $T_3$   nilpotent and $CT_1$ FCR.
To this end we want to provide an algorithm which is based on a first important observation, condensed in the following lemma.
\medskip

\begin{lemma} \label{lemma10}
The following facts are equivalent:
\begin{enumerate}[label={{\roman*)}}]
\item \label{hp:1_lemma10}
There exists a solution $(T_1, T_3, T_4)$ of \eqref{eq:DDsolution0},  for which $(T_3,C)$   is reconstructable and $\rank(CT_1)=m$;
\item \label{hp:2_lemma10} There exists a solution $(T_1, T_3, T_4)$  of \eqref{eq:DDsolution0}, for which   $\rank (T_4) = r$, and for every such solution   the pair $(T_3,C)$   is reconstructable and $\rank(CT_1)=m$.
\end{enumerate}
\end{lemma}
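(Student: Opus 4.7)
The plan is to dispatch the two directions separately. The direction \ref{hp:2_lemma10}$\Rightarrow$\ref{hp:1_lemma10} is immediate: the rank-$r$ solution asserted by \ref{hp:2_lemma10} already has the required reconstructability and FCR properties, so it directly witnesses \ref{hp:1_lemma10}.

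For \ref{hp:1_lemma10}$\Rightarrow$\ref{hp:2_lemma10}, my first move is to parametrize all solutions of \eqref{eq:DDsolution0}. Using Assumption~\ref{ass:PE} together with the rows of \eqref{datalongvsshort} corresponding to $U_p,X_p,Y_f$, one has
\begin{equation*}
\begin{bmatrix} U_p \\ X_p \\ Y_f \end{bmatrix} = \begin{bmatrix} I_m & \zero & \zero \\ \zero & \zero & I_n \\ CB & CE & CA \end{bmatrix} \begin{bmatrix} U_p \\ D_p \\ X_p \end{bmatrix},
\end{equation*}
while $X_f=\begin{bmatrix} B & E & A\end{bmatrix}\begin{bmatrix} U_p^\top & D_p^\top & X_p^\top\end{bmatrix}^\top$. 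Since $\begin{bmatrix} U_p^\top & D_p^\top & X_p^\top\end{bmatrix}^\top$ is FRR, \eqref{eq:DDsolution0} is equivalent to
\begin{equation*}
\begin{bmatrix} T_1 & T_3 & T_4 \end{bmatrix}\begin{bmatrix} I_m & \zero & \zero \\ \zero & \zero & I_n \\ CB & CE & CA \end{bmatrix}=\begin{bmatrix} B & E & A \end{bmatrix},
\end{equation*}
and reading the three block columns yields $T_1=(I_n-T_4 C)B$, $T_3=(I_n-T_4 C)A$, together with the single constraint $T_4 CE=E$. Hence the entire solution set is parametrized by the choice of $T_4$ satisfying $T_4 CE=E$, with $T_1$ and $T_3$ then uniquely determined.

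Assuming \ref{hp:1_lemma10}, I invoke Proposition~\ref{prop:eqconditions} (direction \ref{hp:2_prop_eqcond}$\Rightarrow$\ref{hp:1_prop_eqcond}) to deduce Assumption~\ref{ass:UIO3} and $\rank\left(\left[\begin{smallmatrix} CB & CE \end{smallmatrix}\right]\right)=m+r$; in particular $\rank(CE)=r$. The canonical choice $T_4=E(CE)^\dag$ then satisfies $T_4 CE=E$ and has rank exactly $r$ (at most $r$ by construction, at least $r$ because $T_4 CE=E$ has rank $r$), which exhibits a rank-$r$ solution. For the universal statement I fix an arbitrary rank-$r$ matrix $T_4$ satisfying $T_4 CE=E$ and apply Lemma~\ref{lemmone} to conclude $\rank(I_n-T_4 C)=n-r$, i.e.\ $\Ker(I_n-T_4C)=\im(E)$. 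Combined with Assumption~\ref{ass:UIO3}, the discussion following Proposition~\ref{prop:DBUIO} yields reconstructability of $(T_3,C)=((I_n-T_4C)A,C)$, while Proposition~\ref{prop:CBuuFCR} (direction \ref{ass_1_prop_CBFCR}$\Rightarrow$\ref{ass_2_prop_CBFCR}, applied with $\Du=T_4$) gives $\rank(CT_1)=\rank\bigl(C(I_n-T_4 C)B\bigr)=m$.

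The main obstacle is the uniformity required in the universal part of \ref{hp:2_lemma10}: the reconstructability of $(T_3,C)$ and the FCR of $CT_1$ must hold not only for the canonical pick $T_4=E(CE)^\dag$ but for every rank-$r$ $T_4$ compatible with $T_4 CE=E$. This uniformity hinges on the fact that Lemma~\ref{lemmone} is formulated for arbitrary rank-$r$ solutions of the constraint \eqref{eq:constraint3}, and this generality is precisely what propagates through Proposition~\ref{prop:CBuuFCR} and the reconstructability argument.
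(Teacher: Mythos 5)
Your proposal is correct and follows essentially the same route as the paper: the trivial direction, the parametrization of all solutions of \eqref{eq:DDsolution0} as $(T_1,T_3,T_4)=((I_n-T_4C)B,(I_n-T_4C)A,T_4)$ with $T_4CE=E$ under Assumption~\ref{ass:PE}, and then the model-based machinery of Section~\ref{sec3} (the rank-$r$ choice $E(CE)^\dag$, Lemma~\ref{lemmone}, the reconstructability remark, and Proposition~\ref{prop:CBuuFCR}) to handle every rank-$r$ solution. The only difference is presentational: you explicitly invoke Proposition~\ref{prop:eqconditions} \ref{hp:2_prop_eqcond}$\Rightarrow$\ref{hp:1_prop_eqcond} to secure Assumption~\ref{ass:UIO3} and $\rank([\begin{smallmatrix} CB& CE\end{smallmatrix}])=m+r$, which the paper leaves implicit in its appeal to ``the analysis carried out in Section~\ref{sec3}.''
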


\begin{proof} 
\ref{hp:2_lemma10} $\Rightarrow$ \ref{hp:1_lemma10} is obvious, so we only need to prove that \ref{hp:1_lemma10} $\Rightarrow$ \ref{hp:2_lemma10}.
By mimicking the proof of \ref{hp:3_prop_eqcond} $\Rightarrow$ \ref{hp:1_prop_eqcond} in Proposition~\ref{prop:eqconditions}, we 
can claim  that a triple $(T_1,T_3,T_4)$ solves \eqref{eq:DDsolution0}  if and only if it solves
\begin{equation*}
\begin{bmatrix}
      T_1& T_3& T_4
    \end{bmatrix}\begin{bmatrix}
     I_m &\zero_{m \times r}& \zero_{m \times n}\\
    \zero_{n \times m} &\zero_{n \times r}&I_n\\
    CB &CE& CA
\end{bmatrix} = \begin{bmatrix} B & E & A \end{bmatrix},
\end{equation*}
namely if and only if 
$(T_1,T_3,T_4) = ((I_n - T_4C)B , (I_n - T_4C) A, T_4)$, with
$T_4$ any matrix satisfying $E = T_4 CE$.\\

Now, let $(\bar T_1,\bar T_3,\bar T_4)$  be any solution of \eqref{eq:DDsolution0}, satisfying the hypotheses in \ref{hp:1_lemma10}.
This implies that 
$\bar T_4CE =E$ (and hence $\rank(CE)=\rank (E)=r$),
the  pair $((I_n - \bar T_4C)  A, C) = (\bar T_3, C)$ is reconstructable, and 
$C(I_n - \bar T_4C)B = C \bar T_1$ has FCR.
But by the analysis we carried out in Section~\ref{sec3}, we can claim 
that there exists $T_4^*$ of rank $r$, satisfying $T_4^*CE =E$, and for every such $T_4^*$ it will still be true that  $((I_n - T_4^* C)  A, C)$ is reconstructable, and 
$C(I_n -  T_4^*C)B$ has FCR.
This completes the proof.
\end{proof} 
\medskip

The previous Lemma~\ref{lemma10} tells us that in order to find a solution $(T_1,T_3, T_4)$ of \eqref{eq:DDsolution0} with the required properties we need to focus on those for which  $\rank (T_4) = r$. 
On the other hand, the previous proof has clarified that the solutions of \eqref{eq:DDsolution0}
are those and only those 
expressed as $(T_1,T_3,T_4) = ((I_n - T_4C)B , (I_n - T_4C) A, T_4)$, with
$T_4$ any matrix satisfying $E = T_4 CE$.
However, the matrix $E$ is not available, and hence we need to select such matrices $T_4$ by making use only of data.
\\
Let $S$ be a $(T-1)\times (T-1)$ nonsingular square (NSS) matrix such that 
\begin{equation}
\label{eq:algstep2}
    \begin{bmatrix}
       U_p\\ X_p\\Y_f
    \end{bmatrix} S = \begin{bmatrix}
       I_m & \zero_{m \times (T-1-m-n)} & \zero_{m \times n} \cr
       \zero_{n \times m} & \zero_{n \times (T-1-m-n)} & I_n \\
       Y_B & Y_E & Y_A
    \end{bmatrix},
\end{equation}
for suitable matrices $Y_A, Y_B$ and $Y_E$ with $p$ rows.
Such a matrix exists because $[\begin{smallmatrix}
       U_p\\ X_p
    \end{smallmatrix}]$ is FRR (as a consequence of Assumption~\ref{ass:PE}).
Set $\bar X_f \doteq X_f S$ and block-partition it, according  to the block-partitioning of $Y_f S$, as
$$\bar X_f = \begin{bmatrix} X_B & X_E & X_A\end{bmatrix}.$$ Then, $(T_1,T_3,T_4)$ solves \eqref{eq:DDsolution0}  if and only if it solves 
\begin{align*}
&\begin{bmatrix} X_B & X_E & X_A\end{bmatrix} = \\
&=\begin{bmatrix} T_1 & T_3 & T_4\end{bmatrix} \begin{bmatrix}
       I_m & \zero_{m \times (T-1-m-n)} & \zero_{m \times n} \cr
       \zero_{n \times m} & \zero_{n \times (T-1-m-n)} & I_n \\
       Y_B & Y_E & Y_A
    \end{bmatrix},
\end{align*}
namely if and only if 
$(T_1,T_3,T_4) = (X_B - T_4 Y_B, X_A - T_4 Y_A, T_4)$, with
$T_4$ any matrix satisfying $X_E = T_4 Y_E$.\\
This allows to say that 
$T_4$  satisfies $X_E = T_4 Y_E$ if and only if it satisfies
$E = T_4 CE$.

The previous analysis and results lead to Algorithm~\ref{alg:DDUIO} that describes  the procedure to determine the matrices of a dead-beat UIO-based residual generator ($\Au$, $\Buu$, $\Buy$, $\Du$, $C$) for system \eqref{eq:system} (under Assumption~\ref{ass:PE}), based on the given data.
 
\begin{algorithm}
    \caption{Data Driven UIO matrix estimation procedure}\label{alg:DDUIO}
     \textbf{Input}: $r$, $U_p$, $X_p$, $Y_p$, $X_f$, $Y_f$;\\
  \textbf{Output}: $\Au$, $\Buu$, $\Buy$, $\Du$, $C$;
  \begin{enumerate}
 \item[\textbf{1.}] Check if conditions \ref{ass:2a_prop_DDass} and \ref{ass:2b_prop_DDass} in Proposition~\ref{prop:DDassumptions} hold.
If not,   the problem is not solvable. Otherwise, go to Step~2.
\smallskip
\item[\textbf{2.}] Set $C = Y_p X_p^\dag$. Let $S$ be a $(T-1)\times (T-1)$ NSS matrix such that \eqref{eq:algstep2} holds, for suitable matrices $Y_A, Y_B$ and $Y_E$ with $p$ rows.
Find a solution $T_4^*$ of $X_E = T_4 Y_E$ with 
  $\rank (T_4^*) = r$, and set $T_1^* = X_B - T_4^* Y_B$ and $T_3^* = X_A - T_4^* Y_A$.
Since at Step 1 we have checked that the problem is solvable, then necessarily $(T_3^*, C)= (X_A - T_4^* Y_A, C)$ is  reconstructable and $CT_1^*$ is FCR.\\
\smallskip
\item[\textbf{3.}]  Let $L$ be such that $T_3^*  - L C$ is nilpotent. Then the matrices of the residual generator are:
$$
\begin{array}{l}
\Au= T_3^* - LC, \quad    \Du = T_4^*,  \quad C = Y_p X_P^\dag, \cr\cr
\Buu = X_B - T_4^* Y_B, \quad \Buy = L + (T_3^*-LC) T_4^*.
\end{array}$$
    \end{enumerate}
\end{algorithm}
\smallskip

\begin{example} \label{ex1}
Consider the system $\Sigma$, of dimension $n=5$,
with describing matrices:

\begin{small}
\begin{align*}
A  &= \begin{bmatrix}
    0.8&  0&    0&      0&    0\\
   -0.8  &0 &   0&      0&    0\\
   -1   & 0 &  -1.2  & -0.5 &-1.3\\
    2   &-0.6 & 2.6 &   1 &   2.3\\
    0.8 &-0.9&  0.6&    0.1&  0
\end{bmatrix},\\
B &= \begin{bmatrix}
    1\\
    0\\
    0\\
    0\\
    0
\end{bmatrix}, \  E= \begin{bmatrix}
    0 &0\\
    0 &0\\
    1 &1\\
    0 &1\\
    0& 0
\end{bmatrix}, \
C = \begin{bmatrix}
    1 &0& 0&  0& 0\\
   0 &0 &1 &-2& 0\\
  -1 &0 &0 & 1& 0
\end{bmatrix}.
\end{align*}
\end{small}

\noindent As   in \cite{DisaròValcherReducedUIO}, the historical  (both known and unknown) input data have been randomly generated, uniformly in the interval $(-5,5)$ for the known input $u(k)$, and in the interval $(-2,2)$ for the disturbance $d(k)$. The time-interval of the offline experiment has been set to $T=150$. We have collected the data corresponding to the input/output/state trajectories and then checked that assumptions \ref{ass:2a_prop_DDass} and \ref{ass:2b_prop_DDass}.
 Clearly, from $Y_p$ and $X_p$ we   deduce  $C$.
We have then set as matrices of the UIO in \eqref{eq:UIOresgen} the ones corresponding to the following particular solution of equation \eqref{eq:constraint}, namely
\begin{small}
\begin{align*}
\Au &= \begin{bmatrix}
    0  & 0 &   0&  0& 0\\
   -0.8& 0 &   0 & 0& 0\\
    0.8& 0 &   0 & 0& 0\\
    -1.6& 0 &   0 & 0& 0\\
    0.8& -0.9& 0.6 &0.1 & 0\\
\end{bmatrix}, \ \Buu = \begin{bmatrix}
   1\\
   0\\
   1\\
   -2\\
   0\\
\end{bmatrix},\\
\Buy &= \begin{bmatrix}
       0.8 & 0 & 0\\
         0 & 0 & 0\\
         0 & 0 & 0\\
         0 & 0 & 0\\
         0.9 & 0.6 &  1.3\\
\end{bmatrix}, \ \Du = \begin{bmatrix}
     0 & 0 & 0\\
     0 & 0 & 0\\
     1 & 1 & 2\\
     3 & 0 & 1\\
     0 & 0 & 0\\
\end{bmatrix}.  
\end{align*}
\end{small}

It is easy to verify that the matrix $\Au$ is nilpotent with nilpotency index $k_N=3$. \\
In order to test the performance of the dead-beat UIO generator, we fed the system with the (known) input $u(k) = 0.9\sin(0.4k+3), \ k\in\mathbb{Z}_+$, a random disturbance $d(k)$ whose first and second entries take values uniformly in the interval $(-5,5)$ and $(-2,2)$, respectively, and the fault signal $$f(k)=\begin{cases}
    0&k<k_f,\\
    \max\left\{0.1+\exp\left(\frac{-10}{k-k_f+1}\right),0.9\right\}& k\ge k_f.
\end{cases}$$ 
We let $k_{id}$ denote the first time  at which we start the fault estimation. We have simulated four different scenarios in which we changed the relative position of $k_N,k_{id}$ and $k_f$. 

\begin{figure}
\centering
\subfloat[$3=k_{id}\leq k_f=10$]{
  \includegraphics[width=40mm]{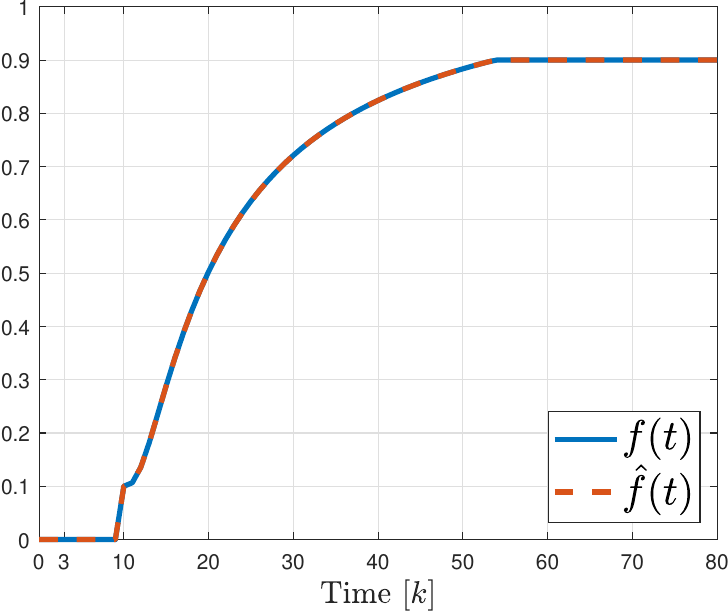}
}
\subfloat[$1=k_f< k_{id}=3$]{
  \includegraphics[width=40mm]{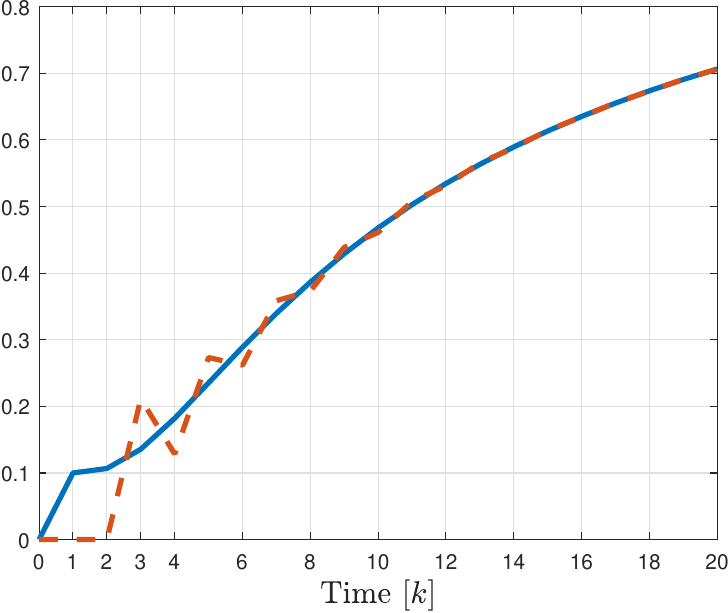}
}
\vspace{3mm}
\subfloat[$1=k_{id}< k_0=3$]{
  \includegraphics[width=40mm]{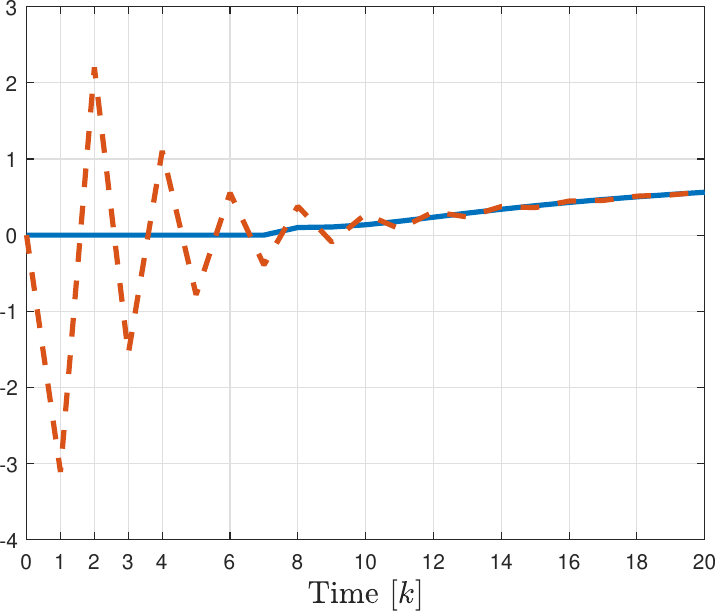}
}
\subfloat[$ 5=k_f<k_{id}=20$]{
  \includegraphics[width=40mm]{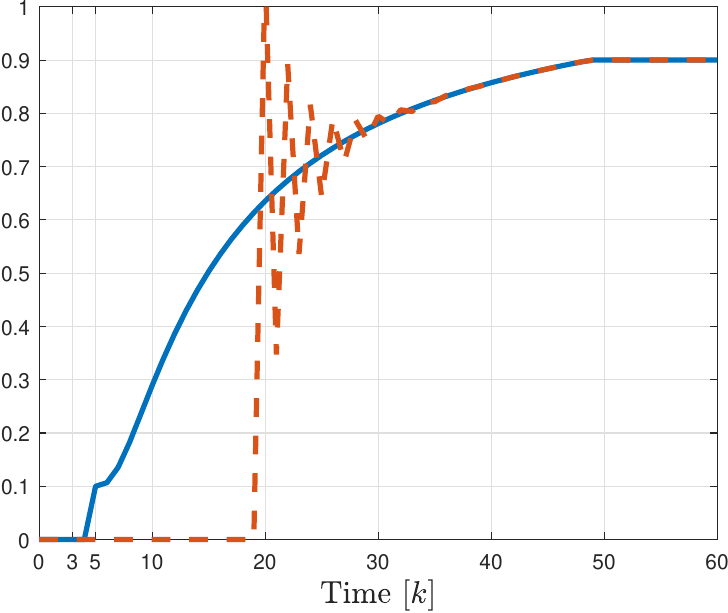}
}
\caption{Dynamics of the real and estimated faults.}
\label{fig1}
\end{figure}
On the top left plot we have the ideal case, namely, the one in which the identification starts only when the estimation error has become zero ($k_{id}=k_N$) and the fault   starts after $k_{id}$. In this scenario, the fault is perfectly estimated ($\hat f(k)=f(k), \forall k\ge k_{id}$). On the top right graph, we have simulated the case in which the fault appears before  $e(k)$ becomes zero ($k_f<k_N$), and the identification starts at $k_{id}=k_N$.  In the bottom left plot, we start the identification before $k_N$, when    the residual is nonzero only due to the estimation error ($k_{id}<k_N$), and the fault affects the system at $k_f> k_N$.  Finally, in the bottom right plot, the identification starts after  the fault appears and the effect of $e(0)$ is extinguished ($k_N\leq k_f<k_{id}$). In the last three cases we can notice that the fault estimate $\hat f$ oscillates around the real value of  $f$ and, in a finite number of steps, it reaches the correct value, as expected. \\
 From the third plot we deduce that starting the FDI before $k_N$ may lead to false alarms; on the other hand, from the fourth plot we can see that late estimation 
 leads to errors.
 Therefore 
the best choice is to start the identification   exactly when  $e(k)$   becomes zero, namely to assume $k_{id}=k_N$.   
\end{example}

\section{Concluding remarks}  \label{sec:CR}

In this paper we have addressed the data-driven FDI problem for a discrete-time LTI system affected by disturbances. The FD is performed through  a dead-beat UIO-based residual generator that is designed by resorting only to data, with no a priori information about the system matrices.
It is worth mentioning that also the conditions for the existence of such a residual generator can be checked on data.
Moreover, the available data allow to design the residual generator, but not to identify the original system description. An algorithm illustrates  the preliminary verification and the design procedure.

The paper introduces strong assumptions on the system in order to be able to identify (and not only detect) each possible fault affecting the actuators by resorting to a single residual generator. This seemed to us the natural starting point to illustrate our approach. However, the approach and the analysis we proposed can be generalised (and this is an ongoing research) in various ways:
first of all, we can resort to a bank of residual generators rather than a single one in order to identify the fault. Secondly, we may assume that the fault affects a single actuator at a time, since the case when multiple actuators fail at the same time is not always realistic. Finally, 
we have focused on the case when the UIO is dead-beat, since in this case the effect of the state estimation error on the residual vanishes in a finite number of steps. If we deal with an asymptotic UIO, we simply need to introduce some threshold mechanism and some least squares approximation, along the lines illustrated in Remark~\ref{rem:threshold}.
All such extensions are feasible at limited  effort in terms of analytical tools, and they would impose weaker conditions on the system properties and on the collected data, but of course they would make the 
formulas and calculations more complicated.

\appendix

\begin{lemma} \label{lemmone}
Let $A\in {\mathbb R}^{n\times p}, B\in {\mathbb R}^{p\times n},$ and $C\in {\mathbb R}^{n\times r},$  and assume that the following conditions hold:
\begin{enumerate}[label={c1)}]
\item \label{hp:c1} $\rank(C)=r$;
\end{enumerate}
\begin{enumerate}[label={c2)}]
\item \label{hp:c2} $(I_n-AB)C=\zero_{n\times r}$;
\end{enumerate}
\begin{enumerate}[label={c3)}]
\item \label{hp:c3} Either $\rank(A) =r$ or $\rank( B)=r$.
\end{enumerate}
Then
\begin{enumerate}[label={{\roman*)}}]
\item\label{hp:1_lemmone}$\rank(I_n-AB)=n-r$ and therefore $\im (C)= \Ker (I_n - AB)$.
\end{enumerate}
If, in addition, $D\in {\mathbb R}^{n\times m}$ is such that
\begin{enumerate}[label={c4)}]
\item\label{hp:c4} $\rank(\left[\begin{smallmatrix} C & D\end{smallmatrix} \right])=r + m$
\end{enumerate}
then
\begin{enumerate}[label={ii)}]
\item\label{hp:2_lemmone} $\rank((I_n-AB)D) =m$.
\end{enumerate}
\end{lemma}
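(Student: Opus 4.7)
The plan is to first establish part \ref{hp:1_lemmone} by sandwiching $\rank(I_n - AB)$ between two matching bounds, and then to deduce part \ref{hp:2_lemmone} almost for free from the kernel characterization $\Ker(I_n-AB)=\im(C)$ together with the joint full-column-rank hypothesis on $\bigl[\begin{smallmatrix} C & D\end{smallmatrix}\bigr]$.

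For part \ref{hp:1_lemmone}, the first observation is that c2) is just a restatement of the inclusion $\im(C)\subseteq \Ker(I_n-AB)$, so by c1) we get $\dim\Ker(I_n-AB)\ge r$, i.e., $\rank(I_n-AB)\le n-r$. The crux is therefore the reverse inequality. The key trick I would use is to pin down $\rank(AB)=r$ exactly. On the one hand, $ABC=C$ yields $r=\rank(C)=\rank(ABC)\le \rank(AB)$. On the other hand, regardless of which alternative in c3) holds, $\rank(AB)\le \min(\rank(A),\rank(B))\le r$. Hence $\rank(AB)=r$. Now the subadditivity of the rank applied to $I_n=(I_n-AB)+AB$ gives
$$n=\rank(I_n)\le \rank(I_n-AB)+\rank(AB)=\rank(I_n-AB)+r,$$
so $\rank(I_n-AB)\ge n-r$. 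Combining both inequalities yields $\rank(I_n-AB)=n-r$, and since $\im(C)\subseteq \Ker(I_n-AB)$ with $\dim\im(C)=r=\dim\Ker(I_n-AB)$, we conclude $\im(C)=\Ker(I_n-AB)$.

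For part \ref{hp:2_lemmone}, since $(I_n-AB)D$ has $m$ columns, it suffices to show its kernel is trivial. Assume $(I_n-AB)Dv=0$ for some $v\in\mathbb{R}^m$. By part \ref{hp:1_lemmone}, $Dv\in\Ker(I_n-AB)=\im(C)$, so $Dv=Cw$ for some $w\in\mathbb{R}^r$. Rewriting as
$$\begin{bmatrix} C & D\end{bmatrix}\begin{bmatrix} -w\\ v\end{bmatrix}=\zero_n,$$
hypothesis c4) forces $w=0$ and $v=0$. Therefore $(I_n-AB)D$ is FCR, i.e., has rank $m$. I do not expect significant obstacles: the only step requiring any care is the exploitation of c3) to obtain $\rank(AB)=r$; in fact, the argument reveals that c3) is used solely through the weaker consequence $\rank(AB)\le r$, and the rest of the proof is pure rank bookkeeping together with the observation that inclusion of subspaces of equal dimension forces equality.
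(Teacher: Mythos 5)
Your proof is correct, and it reaches both conclusions by a genuinely different (and somewhat more elementary) route than the paper. For part i), the paper also first gets $\rank(I_n-AB)\le n-r$ from c1)--c2), but it obtains the reverse inequality by a change of basis that column-compresses $AB$ into the block form $\left[\begin{smallmatrix}\Delta_1 & \zero\\ \Delta_2 & \zero\end{smallmatrix}\right]$ and then reads off $\rank(I_n-AB)\ge n-k$ with $k=\rank(AB)\le r$; you replace this block-triangularization with the one-line subadditivity bound $n=\rank\bigl((I_n-AB)+AB\bigr)\le\rank(I_n-AB)+\rank(AB)$, which is cleaner and makes explicit (as you note) that c3) enters only through $\rank(AB)\le r$ --- the same is true, implicitly, in the paper's argument, so your incidental observation $\rank(AB)=r$ via $ABC=C$ is harmless but not needed. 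For part ii), the paper argues constructively: it builds a nonsingular $S=\left[\begin{smallmatrix}P\\ R\end{smallmatrix}\right]$ with $SC=\left[\begin{smallmatrix}I_r\\ \zero\end{smallmatrix}\right]$, shows $\rank(RD)=m$ from c4), factors $I_n-AB=LR$ with $L$ of full column rank (using $\im(C)=\Ker(I_n-AB)$), and concludes that $(I_n-AB)D=L(RD)$ is a product of full-column-rank matrices; your version bypasses the factorization entirely with a direct null-space argument: $(I_n-AB)Dv=\zero$ forces $Dv\in\Ker(I_n-AB)=\im(C)$, i.e.\ $Dv=Cw$, and then c4) kills $\left[\begin{smallmatrix}-w\\ v\end{smallmatrix}\right]$. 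Both proofs are valid; yours buys brevity and abstraction, while the paper's buys explicit matrices ($S$, $L$, $R$) that make the rank bookkeeping visible, though these objects are not reused elsewhere in the paper.
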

\begin{proof} \ref{hp:1_lemmone}\ 
Conditions \ref{hp:c1} and \ref{hp:c2} imply that
$\rank(I_n-AB) \le n-r$.
From \ref{hp:c3} it follows that $k := \rank(AB) \le r$ and hence there exists a NSS matrix $T\in {\mathbb R}^{n\times n}$ such that
$$T^{-1} AB T = \begin{bmatrix}
    \Delta_1 & \zero_{k\times n-k}\cr
    \Delta_2 & \zero_{n- k\times n-k}
\end{bmatrix}, $$
where $\Delta_1\in {\mathbb R}^{k\times k}$ and 
$\Delta_2\in {\mathbb R}^{n-k\times k}$.
This implies that
\begin{eqnarray*}
    \rank (I_n-AB)&=& \rank(T^{-1} (I_n-AB) T) \\
    &=& \rank \left(\begin{bmatrix} I_k - \Delta_1 & 
    \zero_{k\times n-k}\cr
    - \Delta_2 & I_{n- k}
    \end{bmatrix}\right)
\end{eqnarray*}
This implies that $\rank (I_n-AB) \ge n-k \ge n-r$. 
Consequently, $\rank (I_n-AB) = n-r$ and the second claim follows immediately from \ref{hp:c2}. 
\smallskip

\ref{hp:2_lemmone}\ Let
$$ S= \begin{bmatrix} P\cr R\end{bmatrix} \in {\mathbb R}^{n\times n},$$
with $P\in {\mathbb R}^{r \times n}$ and $R\in {\mathbb R}^{(n-r)\times n}$, be a NSS  matrix such that
$$S C = \begin{bmatrix} I_r\cr \zero_{n-r\times r}\end{bmatrix}.$$
Then, by \ref{hp:c4}, 
$$r+m = \rank \left(S\begin{bmatrix}C & D\end{bmatrix}\right) = \rank\left(
\begin{bmatrix}
  I_r & PD\cr
    \zero_{n- r \times r} & RD
\end{bmatrix}\right)$$
implies $\rank(RD)=m$.
Moreover, as $\im (C)= \Ker (I_n - AB)$, it follows that
$I_n - AB = L R$
for some FCR matrix $L\in {\mathbb R}^{n\times (n-r)}.$
Finally, 
$(I_n - AB) D = L (RD)$ is the product of two FCR matrices, and hence has FCR $m$.
\end{proof}

\end{document}